\def\E{\mathbb{E}}
\def\P{\mathbb{P}}
\newtheorem{theorem}{Theorem}
\newtheorem{lemma}{Lemma}
\newtheorem{proposition}{Proposition}
\newtheorem{h}{Hypothesis}
\newcommand{\keywords}[1]{\par\addvspace\baselineskip
\noindent\enspace\ignorespaces#1}
\begin{document}

\title{On the implied volatility of Inverse options under stochastic volatility models}

\author{Elisa Al\`os$^{\ast}$, Eulalia Nualart$^{\ast}$ and Makar Pravosud\thanks{ Universitat Pompeu Fabra and Barcelona School of Economics, Department of Economics and Business, Ram\'on Trias Fargas 25-27, 08005, Barcelona, Spain. 
EN acknowledges support from the Spanish MINECO grant PID2022-138268NB-100 and
Ayudas Fundacion BBVA a Equipos de Investigaci\'on Cient\'ifica 2021.}}
\maketitle

\begin{abstract}
In this paper, we study the short-time behavior of at-the-money implied volatility for inverse European options with a fixed strike price. The asset price is assumed to follow a general stochastic volatility process. Using techniques from Malliavin calculus, such as the anticipating It\^o's formula, we first compute the implied volatility of the option as its maturity approaches zero. Next, we derive a short-maturity asymptotic formula for the skew of the implied volatility, which depends on the roughness of the volatility model. We also demonstrate that our results can be easily extended to Quanto-Inverse options. We apply our general findings to the SABR and fractional Bergomi models and provide numerical simulations that confirm the accuracy of the asymptotic formula for the skew. Finally, we present an empirical application using Bitcoin options traded on Deribit, showing how our theoretical formulas can be applied to model real market data for such options.
\end{abstract}

\keywords{{\bf Keywords:} Inverse European options, stochastic volatility, crypto derivatives, Malliavin calculus, implied volatility}

\section{Introduction}

Over the last several decades, option pricing models have been developed for conventional assets such as stocks, bonds, interest rates, and foreign currencies. Today, cryptocurrency derivatives, financial contracts whose value depends on an underlying cryptocurrency asset, have emerged as a new class of securities that have garnered significant attention. A key peculiarity of crypto derivatives is how one defines a cryptocurrency: Is it a security, a currency, or a commodity? With regard to options, the answer to this question affects the pricing methodology. A detailed discussion on this topic can be found in Alexander {\it{et al.}} \cite{Alexander2021}. Unfortunately, there is no clear legal answer to this question; see Bolotaeva {\it{et al.}} \cite{Bolotaeva2019}. However, a closer examination of the issue leads to the following conclusions.

Cryptocurrency (at least Bitcoin and Ethereum) cannot be considered a security, as it is fully decentralized, and no central authority controls its issuance, whereas securities are issued by a central authority. Furthermore, cryptocurrencies cannot be treated as conventional (fiat) currencies. A key question is whether they preserve the fundamental characteristics of money. Although cryptocurrencies can occasionally be used to buy and sell goods, they are not widely accepted as a means of payment. Furthermore, historical data shows the extreme volatility of cryptocurrencies, leading to the conclusion that their purchasing power is not stable enough over time, and therefore they cannot serve as a store of value. In contrast, Central Bank Digital Currency addresses the volatility issue by controlling the issuance. Finally, although some companies may accept cryptocurrencies as payment, most still use traditional currencies to measure the value of goods and services. For further discussion on the similarities between cryptocurrencies and regular currencies, see Hazlett and Luther \cite{Hazlett2020} and Ammous \cite{Ammous2018}.

On the other hand, if we consider the classical Garman and Kohlhagen \cite{Garman} foreign exchange (FX) pricing model, the construction of the delta-hedged portfolio for FX options is conceptually different from regular options, as we cannot buy and sell units of the FX spot rate. Instead, hedging is done by buying and selling units of the underlying foreign bond. This difference fundamentally undermines the idea of pricing crypto options using FX models, as cryptocurrencies can be bought and sold in a manner similar to regular tradable assets.

Alternatively, some people consider Bitcoin to be digital gold, but how similar is it to the actual commodity? In Goutte{\it{et al.}} \cite{cryptofinance}, the authors define the following characteristics of hard commodities: they are costly to mine or extract, storable, not controlled by any single government or institution in terms of global supply, demand, or price, and they have intrinsic value, which means that they can be consumed or used as inputs in the production of other goods. The first three properties are naturally satisfied by Bitcoin, but the fourth remains debatable. As a result, we cannot conclusively classify crypto as a commodity. For a more detailed discussion, see Ankenbrand and Bieri \cite{Ankenbrand2018} and Gronwald \cite{Gronwald2019}. Finally, Dyhrberg \cite{DYHRBERG201685} provides an empirical analysis using GARCH models for Bitcoin volatility to explore whether Bitcoin is more similar to gold or the US dollar.

The aim of this paper is to solve the pricing problem of crypto options using only the payoff function. A natural way to define the payoff of crypto options is through Inverse options, which are settled in cryptocurrency rather than fiat currency. We will focus on the case of an Inverse European call, whose payoff is given by: $$\left(\frac{S_T-K}{S_T}\right)_+,$$ 
where $(x)_+=\max(x,0)$,
$S_T$ denotes the price of Bitcoin in US dollars at maturity $T$ and $K$ is a fixed strike.
In simple terms, if the option becomes in-the-money, the payoff is made in cryptocurrency rather than fiat currency.

Inverse European options are the only type of options traded on the Deribit exchange, which controls over 80 percent of the global crypto options market. For example, on June 10, 2023, the open interest in Bitcoin options on Deribit was 7.5 billion, while the closest competitors, OKX and Binance, had open interest of 0.5 billion and 0.17 billion, respectively. As a result, accurately pricing and hedging Inverse European options is crucial from a practical standpoint. However, this is quite challenging due to the mechanics of the Deribit exchange.
Deribit does not allow fiat currency, and all options are margined in cryptocurrency. This is particularly beneficial for professional crypto traders. For instance, consider a crypto hedge fund or a crypto market maker, businesses that deal exclusively with crypto assets. It is natural for them to manage their trading books in cryptocurrency rather than fiat currency. While they are exposed to cryptocurrency depreciation risk, it is easier to manage this risk at the book level rather than on a trade-by-trade basis. This is one of the key reasons for the development of Inverse European options.

The literature on crypto derivatives is still relatively new but has been gaining increasing attention from researchers. For example, Alexander {\it{et al.}} \cite{Alexander2021} price Inverse European options under the constant volatility Black-Scholes model. Matic {\it{et al.}} \cite{Matic2023} study empirical hedging of Inverse options under different stochastic volatility models. Alexander {\it{et al.}} \cite{Alexander2022} use a delta adjusted for skew to hedge Inverse options under constant and local volatility. Hou {\it{et al.}} \cite{Hou2020} price Inverse options under stochastic volatility models with correlated jumps. Finally, Siu and Elliott \cite{Siu2021} use the SETAR-GARCH model to model Bitcoin return dynamics.

In this paper, we study the behavior of implied volatility for Inverse European options under general stochastic volatility models. Specifically, we provide general sufficient conditions on a stochastic volatility model to derive short-time maturity asymptotic formulas for the at-the-money level and the skew of the implied volatility of Inverse options. We then apply these results to two well-known models: the SABR and fractional Bergomi models.

The main tool for proving these results is the anticipating It\^o's formula from Malliavin calculus; see Appendix \ref{MCintro} for an introduction to this topic. The first step is to apply this formula using a similar approach to Alòs \cite{Alos2006a}, in order to derive a decomposition formula for the option price; see Theorem \ref{decomposition}. This approach is also used in Alòs {\it{et al.}} \cite{Alos2007b} for jump-diffusions with stochastic volatility. In that paper, the short-time limit of the implied volatility skew for European calls is also derived, which corresponds to the second step of our paper. This approach has been further developed in other contexts. Recently, Alòs {\it{et al.}} \cite{Alos2022a} applied this methodology to VIX options, and Alòs {\it{et al.}} \cite{Alos2022} extended it to Asian options. These two papers demonstrate how studying Asian or VIX options under stochastic volatility reduces to studying European-type options, where the underlying is represented by a stochastic volatility model with a modified volatility process that depends on maturity.

In the present paper, we extend the methodology outlined above to Inverse European calls. The main novelty, compared to the papers mentioned, is the following.
\begin{itemize}
\item To the best of our knowledge, we are the first to provide a rigorous analytical study of the asymptotic behavior of the implied volatility of Inverse European options under fractional and stochastic volatility models. When dealing with Inverse options, the Black-Scholes formula differs from the classical one, so our results cannot be directly derived from existing literature. Instead, we work with the Black-Scholes formula for Inverse European calls under constant volatility, as obtained in Alexander 
{\it{et al.}} \cite{Alexander2021}. This formula turns out to be a non-monotonic function (see Figure \ref{fig9}), meaning its inverse is not uniquely defined, which complicates computations since implied volatility is defined as the inverse of the Black-Scholes formula. We overcome this challenge by noting that for sufficiently small maturities, which is the case studied in this paper, the function is strictly monotonic. However, the inverse is not explicit as in the classical Black-Scholes case, so a careful study of its behavior and derivatives is required, as shown in Appendix B.

\item 
Our formulas can be applied to general stochastic and fractional volatility models. We provide numerical simulations using the SABR and fractional Bergomi models. Additionally, we present an empirical example with Bitcoin options to demonstrate how the results of this paper can be used to model real market data for such options, which have become increasingly popular in financial markets, as discussed at the beginning of this introduction.
\end{itemize}

Understanding the behavior of the implied volatility of Inverse options is crucial for hedging purposes and can also be used to derive approximation formulas for their price, as mentioned at the end of Section 2. In a world with a flat implied volatility surface, one could simply use the Black-Scholes delta to replicate the derivative payoff. However, in the presence of implied volatility skew, the classical Black-Scholes hedge will not be accurate. The correct approach is to use the Black-Scholes delta with a correction based on the implied volatility skew. In our empirical example, we focus on the fractional Bergomi model. 

One of the key parameters in this model is the Hurst exponent \( H \). Its value determines the skew that can be reproduced from market data. Since different markets exhibit different behaviors, we expect to see varying values of \( H \) that best fit the observed data. For example, in equity markets, the implied volatility skew exhibits a blow-up, which can be reproduced with \( H < 0.5 \), corresponding to rough volatility See, for instance, Cont and Das \cite{CD24}. In our case, we find that \( H > 0.5 \) best fits our market data and produces zero skew.

Additionally, our analysis allows us to derive the same formulas for Quanto-Inverse European options, whose payoff is given by:
$$
\left(\frac{R(S_T-K)}{S_T}\right)_+,$$
where \( R \) is a fixed exchange rate. From a mathematical perspective, Quanto-Inverse options do not differ significantly from Inverse options. In particular, as shown in Section 2, the implied volatility level and skew of both options are equal up to a factor of \( R \), so it is sufficient to study Inverse options. However, from a practical perspective, Inverse and Quanto-Inverse options differ, as they are traded on different markets. Unlike Inverse options, which are available to the general public, Quanto-Inverse options are over-the-counter derivatives.

The paper is organized as follows: Section 2 presents the problem statement and the main result, Theorem 1, where the at-the-money implied volatility level is derived in (\ref{main1}) and the skew in (\ref{main2}). The intermediate steps leading to the proof of this theorem are provided in Section 3. Specifically, Theorem 2 enables us to obtain (\ref{main1}), and Propositions 1 and 2 are crucial for deriving (\ref{main2}). Note that Proposition 1 relies on (\ref{main1}). The proof of Theorem 1 is provided in Section 4. Section 5 presents a numerical study for the SABR and fractional Bergomi models, along with an empirical application to the implied volatility of Bitcoin options.

\section{Statement of the problem and main results}\label{statement-of-the-problem-and-notation}

Consider the following model for the asset price $S_t$ on the time interval $[0,T]$
\begin{equation} \begin{split}\label{B_Epm}
dS_t &= \sigma_tS_tdW_t\\
 W_t &= \rho W_t' + \sqrt{(1-\rho^2)}B_t,
 \end{split}
\end{equation} where $S_0>0$ is fixed and $W_t$, $W_t'$, and $B_t$ are three standard Brownian motions on $[0,T]$
defined on the same risk-neutral complete probability space $(\Omega, \mathcal{G}, \mathbb{P})$. We denote by $\E$ the expectation with respect to $\P$. For simplicity,  we assume that the interest rate is zero, see for instance in Al\`os {\it{et al.}} \cite{Alos2022a}. We assume that \(W_t'\) and
\(B_t\) are independent and \(\rho\in [-1,1]\) is the correlation coefficient between \(W_t\) and
\(W_t'\). 

We consider the following hypotheses on the volatility process $\sigma_t$:
\begin{h} \label{Hyp1}
The process $\sigma=(\sigma_t)_{t \in [0,T]}$ is square integrable, adapted to the filtration generated by $W'$, a.s. positive and continuous, and satisfies that for  all $t \in [0,T]$,
$$
c_1 \leq \sigma_t \leq c_2,
$$
for some positive constants $c_1$ and $c_2$.
\end{h}

\begin{h}
For all $p\geq 1$ there exist $c>0$ and $\gamma >0$ such that for all $0\leq s\leq r\leq T \leq 1$,
$$(\mathbb E|\sigma_r-\sigma_s|^p)^{1/p} \leq  c (r-s)^{\gamma}.$$
\end{h}

\begin{h}
For all $p \geq 2$, $\sigma \in \mathbb{L}^{2,p}_{W'}$ (see Appendix \ref{MCintro} for the definition of this space).
\end{h}

\begin{h}\label{Hyp4}
There exists  $H\in (0,1)$ and for all $p \geq 1$ there exist constants $c_1,c_2>0$ such  that for $0 \leq s \leq r \leq t \leq T \leq 1$ a.e.,
\begin{equation} \label{d1}
\{\E (\vert D_r^{W'}\sigma_t \vert^p)\}^{1/p}\leq c_1 (t-r)^{H-\frac{1}{2}}
\end{equation}
and
\begin{equation} \label{d2}
\{\E (\vert D_s^{W'} D_r^{W'}\sigma_t \vert^p)\}^{1/p}\leq c_2 (t-r)^{H-\frac{1}{2}} (t-s)^{H-\frac{1}{2}},
\end{equation}
where $D^{W'}$ denotes the Malliavin derivative defined in Appendix \ref{MCintro}.
\end{h}

We observe that when the volatility $\sigma_t$ is constant, the model (\ref{B_Epm}) is the usual Black-Scholes model. We also observe that although the volatility can be driven by a fractional Brownian motion (see Section 5.2), we are assuming that the volatility is not a traded asset. Thus, there exists a risk-neutral probability $\P$ and there are no arbitrage opportunities (see Comte and Renault \cite{CR98}). Moreover, if  $(V_t)_{t \in [0,T]}$ and $(V_t^{Q})_{t \in [0,T]}$ denote the values of an Inverse European call and a Quanto Inverse European call options with fixed strike $K$, respectively, we have that
$$V_0=\mathbb E\left(\frac{S_T-K}{S_T}\right)_{+}=K\times\mathbb E(K^{-1} - S^{-1}_T)_{+},$$  
and
$$V_0^{Q}=\mathbb E \left(\frac{R(S_T-K)}{S_T}\right)_{+}=K \times R \times \mathbb E(K^{-1} - S_T^{-1})_{+},$$
where $R$ is a fixed exchange rate. Recall that $\E$ denotes the expectation under a risk-neutral probability $\P$. Note that as in \cite{Alos2007b}, we assume that the market selects
a unique risk-neutral measure $\P$ under which these derivative contracts are priced.

Notice that the difference between \( V_0 \) and \( V_0^Q \) arises from the currency in which the options are quoted. In our case, \( V_0 \) represents the crypto price of the option, while \( V_0^Q \) represents the dollar value of the option.

We denote by \(BS(t,x,k,\sigma)\) the Black-Scholes price
of an Inverse European call option with time to maturity \(T-t\), log-underlying price \(x\), log-strike price \(k\) and
volatility \(\sigma>0\). Then, it is well-known that (see Alexander {\it{et al.}} \cite{Alexander2021}),
\begin{align*}
\begin{split}
  BS(t,x,k,\sigma)&=N(d_{2}(k,\sigma))-e^{\sigma^2(T-t)}e^{k-x}N(d_{1}(k,\sigma)),\\
  d_{2}(k,\sigma)&=\frac{x-k}{\sigma\sqrt{T-t}}-\frac{\sigma}{2}\sqrt{T-t},\\
  d_{1}(k,\sigma)&=d_{2}(k,\sigma) - \sigma\sqrt{T-t},
\end{split}  
\end{align*}  
 where \(N\) is the cumulative distribution function of a standard normal random variable. Moreover, the Black-Scholes price for a Quanto Inverse European call is given by $BS^{Q}(t,x,k,\sigma) = R \times BS(t,x,k,\sigma)$. 
 
 One can easily check that the Black-Scholes price satisfies the
following PDE 
\begin{align}
  \partial_t BS(t,x,k,\sigma)-\frac{1}{2}\sigma^2 \partial_x BS(t,x,k,\sigma) + \frac{1}{2}\sigma^2 \partial^2_{xx} BS(t,x,k,\sigma) = 0.
\label{bspde}
\end{align}
Moreover, one can also check that the classical relationship between the Gamma, the Vega and the Delta holds, that is,
\begin{equation} \label{gvd}
\frac{\partial_{\sigma}BS(t,x,k,\sigma)}{\sigma(T-t)}=\partial_{xx}^2BS(t,x,k,\sigma)-\partial_xBS(t,x,k,\sigma).
\end{equation} 

As in Al\`os \cite{Alos2006a}, we consider the log price $X_t=\log(S_t)$, which satisfies
\begin{equation*} 
dX_t=\sigma_t dW_t-\frac12 \sigma_t^2 dt.
\end{equation*}

Next, we observe that, as $BS(T,x,k,\sigma)=e^k\times(e^{-k}-e^{-x})_+$,  the price of an Inverse call option $V_0=e^k\times\E(e^{-k}-e^{-X_T})_+$ can be written as 
\begin{equation*} 
V_0=\mathbb{E}(BS(T,X_T,k,v_T)), \qquad \text{where} \quad v_t=\sqrt{\frac{1}{T-t}\int_t^T \sigma_s^2ds} \quad \text{ if } t<T,
\end{equation*}
and by continuity we have $\lim_{t\to T} v_t = \sigma_T$. 
In particular, \(V_T=BS(T,X_T,k,v_T)\). Notice that $V^{Q}_t=R \times V_t$. This implies that the implied volatility level and skew of Inverse and Quanto Inverse European call options are equal up to the factor $R$. Hence, we will only state the main results of this paper for the Inverse options. 

We define the at-the-money implied volatility (ATMIV)  of an Inverse European call option as the quantity $I(0,k^{\ast})$ satisfying $$V_0=BS(0,X_0,k^{\ast},I(0,k^{\ast})),$$ where $k^{\ast}=X_0$. That is,
\(I(0,k^{\ast})=BS^{-1}(0,X_0,k^{\ast},V_0)\).
As it is shown in Appendix B, this inverse is only well-defined for $T$ sufficiently small, which is the case we are studying in this paper.
Precisely, the aim of this paper is to apply  the Malliavin calculus techniques developed in Al\`os \cite{Alos2006a} in order to obtain formulas for  $$\lim_{T\to 0}I(0,k^{*}) \quad \text{and} \quad 
\lim_{T\to 0}\partial_kI(0,k^{*})$$ 
under the general stochastic volatility model (\ref{B_Epm}).

The main results of this paper are given in the following theorem.
\begin{theorem}  \label{limskew}
Assume Hypotheses 1-4. Then, 
\begin{align} \label{main1}
\lim_{T\to 0}I(0,k^*)=\sigma_0.
\end{align}
Moreover,
\begin{align} \label{main2}
\begin{split}
\lim_{T \to 0} T^{\max(\frac12-H, 0)}\partial_kI(0,k^{*})&=\lim_{T \to 0} T^{\max(\frac12-H, 0)}\frac{\rho}{ \sigma_0 T^2}\int_0^T\left(\int_r^T\E(D_r^{W'}\sigma_u) du \right) dr,
 \end{split}
\end{align}
provided that both limits exist.
\end{theorem}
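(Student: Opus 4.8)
The plan is to follow the Al\`os-type decomposition strategy, adapted to the non-standard Black-Scholes function for Inverse options. Writing $X_t=\log S_t$ and $v_t^2(T-t)=\int_t^T\sigma_s^2\,ds$, the price is $V_0=\E(BS(T,X_T,k,v_T))$. First I would apply the anticipating It\^o formula to the process $t\mapsto BS(t,X_t,k,v_t)$ on $[0,T]$. The crucial point is that the PDE (\ref{bspde}) has the same operator $\partial_t+\tfrac12\sigma^2(\partial_{xx}^2-\partial_x)$ as the ordinary log-price generator, so the adapted drift terms cancel exactly as in Al\`os \cite{Alos2006a}: the $\partial_t$ term, the $-\tfrac12\sigma_t^2\partial_x$ term from $dX_t$, the Gamma term $\tfrac12\sigma_t^2\partial_{xx}^2$, and the contribution of $dv_t$ (governed by $d(v_t^2(T-t))=-\sigma_t^2\,dt$) combine, through (\ref{bspde}) and the Gamma--Vega--Delta identity (\ref{gvd}), to zero. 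After taking expectations the $dW_t$ martingale part vanishes, and what survives is a single anticipating cross term. Since $\sigma$ is $W'$-adapted and $W=\rho W'+\sqrt{1-\rho^2}B$, only the $\rho$-component correlates with the volatility, so this term carries a factor $\rho$ and a Malliavin derivative. This yields the decomposition of Theorem \ref{decomposition}, schematically
$$V_0=\E\big(BS(0,X_0,k,v_0)\big)+\frac{\rho}{2}\,\E\int_0^T \partial_x(\partial_{xx}^2-\partial_x)BS(t,X_t,k,v_t)\,\sigma_t\Big(\int_t^T D_t^{W'}\sigma_s^2\,ds\Big)dt,$$
which is the object I would then expand and differentiate. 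The novelty relative to the ordinary call is entirely in the terminal condition, hence in the $BS$ function and its greeks.

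For (\ref{main1}), I would argue as follows. By Hypothesis \ref{Hyp1} and continuity, $v_0=\big(\tfrac1T\int_0^T\sigma_s^2\,ds\big)^{1/2}\to\sigma_0$ almost surely and in every $L^p$ as $T\to0$, so $\E(BS(0,X_0,k^*,v_0))$ is close to $BS(0,X_0,k^*,\sigma_0)$. Using the Malliavin bound (\ref{d1}) together with the boundedness of $\sigma$, the correction term in the decomposition is of higher order in $T$ and, divided by the ATM Vega, vanishes. Applying the mean value theorem to $\sigma\mapsto BS(0,X_0,k^*,\sigma)$ and dividing by $\partial_\sigma BS(0,X_0,k^*,\xi)$ --- which, by the analysis of Appendix B, is nonzero and bounded below in the relevant range once $T$ is small enough that $BS(0,X_0,k^*,\cdot)$ is strictly monotone --- I obtain $I(0,k^*)-\sigma_0\to0$.

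For (\ref{main2}), I would differentiate the identity $V_0(k)=BS(0,X_0,k,I(0,k))$ in $k$ to obtain the implicit expression
$$\partial_k I(0,k)=\frac{\partial_k V_0(k)-\partial_k BS(0,X_0,k,I(0,k))}{\partial_\sigma BS(0,X_0,k,I(0,k))},$$
and evaluate at $k=k^*=X_0$. The numerator is controlled by differentiating the decomposition in $k$: this is the role of the two propositions of Section 3 (Proposition 1, which uses (\ref{main1}) to replace $I(0,k^*)$ by $\sigma_0$ to leading order, and Proposition 2, which extracts the leading short-maturity behaviour of the $k$-derivative of the correction term). Replacing $\sigma_t,\sigma_s,v_t$ by $\sigma_0$ inside the correction --- the errors being controlled by Hypotheses 2--4 and noting $D_t^{W'}\sigma_s^2=2\sigma_s D_t^{W'}\sigma_s$ --- reduces the leading numerator to a multiple of $\rho\,\sigma_0^2\int_0^T\int_r^T\E(D_r^{W'}\sigma_u)\,du\,dr$, while the ATM ratio of $\partial_x(\partial_{xx}^2-\partial_x)BS$ and $\partial_\sigma BS$ in the denominator supplies the explicit powers $\sigma_0^{-3}T^{-2}$. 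Combining these gives the stated $\frac{\rho}{\sigma_0 T^2}$-factor; the scaling $(t-r)^{H-1/2}$ in (\ref{d1}) makes the double integral of order $T^{H+3/2}$, so $\partial_k I(0,k^*)\asymp T^{H-1/2}$ and multiplication by $T^{\max(1/2-H,0)}$ produces a finite limit (in particular zero skew when $H>1/2$). Since the claim is an equality of limits \emph{provided both exist}, it suffices to show the normalized difference of the two sides tends to zero.

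The main obstacle, as the authors flag, is the non-standard Black-Scholes function for Inverse options, whose ATM Vega is neither explicit nor sign-definite for all $\sigma$. Everything hinges on the short-maturity asymptotics at the point $x=k$, where $d_1,d_2$ are $O(\sqrt T)$: I would need precise Taylor expansions of $\partial_\sigma BS$, $\partial_k BS$ and $\partial_x(\partial_{xx}^2-\partial_x)BS$ in this regime, together with a lower bound on the Vega guaranteeing invertibility (the strict-monotonicity statement of Appendix B). The delicate bookkeeping is to show that all remainders --- from the H\"older approximation $\sigma_t\approx\sigma_0$, from the second Malliavin bound (\ref{d2}), and from the mean-value point $\xi$ --- are genuinely of smaller order than the $T^{H-1/2}$ leading term, uniformly as $T\to0$ and in both regimes $H<1/2$ and $H\ge 1/2$.
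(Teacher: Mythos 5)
Your overall architecture coincides with the paper's: the anticipating-It\^o decomposition (Theorem \ref{decomposition}), implicit differentiation of $V_0=BS(0,X_0,k,I(0,k))$, a second decomposition for the numerator, and the Appendix~\ref{Greeks} control of the Inverse-option Greeks near the money. Your identification of the leading constant is also right: $G(0,X_0,k^*,v_0)\sim (2\sqrt{2\pi}\,T^{3/2}\sigma_0^3)^{-1}$ against a Vega $\sim\sqrt{T/2\pi}$ gives the $\rho/(\sigma_0T^2)$ prefactor, and (\ref{d1}) makes the double integral of order $T^{H+3/2}$. One genuine difference: for (\ref{main1}) you propose a direct mean-value argument, $|I(0,k^*)-\sigma_0|\le (\inf\partial_\sigma BS)^{-1}\bigl(|\E BS(0,X_0,k^*,v_0)-BS(0,X_0,k^*,\sigma_0)|+O(T^{H+1/2})\bigr)$, whereas the paper runs the Clark--Ocone/It\^o argument on $BS^{-1}$ applied to the martingale $\Phi_r$ and the drifted path $\Gamma_s$ (Lemmas \ref{bound2} and \ref{bound3}). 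Your route is simpler and suffices for the limit, provided you note that the intermediate point $\xi$ stays in the interval where $g(y)=\sqrt{T}^{-1}\partial_\sigma BS$ is bounded below (i.e.\ the monotonicity window of Appendix~\ref{Greeks}); it also yields the rate $O(T^{\min(\gamma,H)})$ that you will need below.

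The one substantive gap is in the skew numerator. Differentiating the decomposition in $k$ gives
\begin{equation*}
\partial_\sigma BS(0,X_0,k^*,I)\,\partial_kI(0,k^*)=\E\!\int_0^T\!\partial_kH(s,X_s,k^*,v_s)\Lambda_s\,ds+\Bigl(\E\,\partial_kBS(0,X_0,k^*,v_0)-\partial_kBS(0,X_0,k^*,I)\Bigr),
\end{equation*}
and the bracketed difference is a priori of the same order as everything you keep; your sketch never accounts for it. The paper disposes of it through the Inverse-option identity (\ref{p3}), $\partial_kBS=BS-\tfrac12\mathrm{Erfc}(\sigma\sqrt{T}/(2\sqrt{2}))$, which converts the bracket back into $-\E\int_0^TH\Lambda_s\,ds$ (whence the kernel $G=\partial_kH-H$ in Proposition \ref{skew}) plus an Erfc difference killed by (\ref{main1}); this is the genuinely Inverse-specific step of the argument. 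Your version is repairable without (\ref{p3}): since $\partial^2_{\sigma k}BS(0,X_0,k^*,\cdot)=O(\sqrt{T})$ and $\E|v_0-I(0,k^*)|=O(T^{\min(\gamma,H)})$ by your quantitative level estimate, the bracket is $O(T^{1/2+\min(\gamma,H)})$, which after dividing by the Vega and multiplying by $T^{\max(1/2-H,0)}$ is $o(1)$ in both regimes; likewise the $-H$ part of the kernel only contributes $O(T^{H+1/2})$ to the numerator and is harmless. But as written the step is missing, and it is exactly the place where the non-standard $BS$ function changes the bookkeeping relative to \cite{Alos2007b}.
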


We observe that when prices and volatilities are uncorrelated  then the short-time skew equals to zero. Observe also that since the term $\E(D_r^{W'}\sigma_u)$ is of order $(u-r)^{H-\frac12}$ (see Hypothesis 4), the quantity multiplying the term $T^{\max(\frac12-H, 0)}$ of the right hand side of (\ref{main2}) is bounded by $cT^{H -\frac12}$. In particular, its  limit  is $0$ if $H>1/2$. This suggests that, in the case $H<\frac12$, we need to multiply by $T^{\frac12-H}$ in order to obtain a finite limit. See the examples in Section 5.

The results of Theorem  \ref{limskew} can be used in order to derive approximation formulas for the price of an Inverse and Quanto Inverse European call options. Notice that, as
\begin{equation*} 
V_0= BS(0,X_0,k,I(0,k)).
\end{equation*}
by Taylor's formula we can use the approximations
\begin{equation*} 
\begin{split}
I(0,k) &\approx I(0,k^{*}) + \partial_kI(0,k^{*})(k-k^{*}).
\end{split}
\end{equation*}
Of course, this approximation is only linear, and one would expect to obtain better results if one has a short maturity asymptotic formula for the curvature $\partial_{kk}^2I(0,k^{*})$. The short-time maturity asymptotics for the at-the-money curvature of the implied volatility for European calls under general stochastic volatility models is computed in Al\`os and Le\'on \cite{Alos2017}. A Taylor expansion for short maturity asymptotics for Asian options when the underlying asset follows a local volatility model is obtained in Pirjol and Zhu \cite{Pirjol2016}.
In our setting, computing the curvature is more challenging and we leave it for further work.

\section{Preliminary results}

In this section, we provide closed-form decomposition formulas for the price and the ATMIV skew of an Inverse call option under the stochastic volatility model (\ref{B_Epm}).

We begin with the following preliminary lemma. For the standard European call option case, see Lemma 6.3.1 in Alòs {\it{et al.}} \cite{Alos2022a}.
\begin{lemma} \label{bound}
Assume Hypothesis 1.  Then, for all $p \geq 1$ there exist positive constants $C_1(p)$, $C_2$ and $C_3$ such that for all $0\leq s <T<1$,
\begin{align}
\label{o1}
\left(\E(\left|H(s,X_s,k,v_{s})\right|^p)\right)^{1/p}& \leq  C_1(p) \left( T-s \right)^{-1}\\ \label{o2}
\left|\partial_{x}G(s,X_s,k,v_{s})\right| & \leq  C_2 \left ( T-s \right)^{-2}, \\ \label{o3}
\left|(\partial_{xxx}^3-\partial_{xx}^2)G(s,X_s,k,v_{s})\right| & \leq  C_3 \left ( T-s \right)^{-3},
\end{align}
where $H(s,x,k,v_s)=\frac{1}{2}(\partial_{xxx}^3BS(s,x,k,v_s)-\partial_{xx}^2BS(s,x,k,v_s))$ and \\$G(s,x,k,v_s)=\partial_k H(s,x,k,v_s) - H(s,x,k,v_s)$.
\end{lemma}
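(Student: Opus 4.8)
The plan is to reduce every quantity in the statement to an explicit linear combination of $x$-derivatives of the inverse-option Black--Scholes function evaluated at $(s,X_s,k,v_s)$, and then simply read off the powers of $T-s$. Two structural identities drive everything. Differentiating $BS$ and using the Gaussian relation
$$N'(d_2(k,\sigma))=e^{\sigma^2(T-s)}e^{k-x}N'(d_1(k,\sigma)),$$
which follows from $d_1-d_2=-\sigma\sqrt{T-s}$ and $d_1+d_2=\tfrac{2(x-k)}{\sigma\sqrt{T-s}}-2\sigma\sqrt{T-s}$, I obtain the clean Delta $\partial_xBS=e^{\sigma^2(T-s)}e^{k-x}N(d_1)$ and, crucially, $\partial_kBS=-\partial_xBS$. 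Since $\partial_x$ and $\partial_k$ commute and $H=\tfrac12(\partial^3_{xxx}-\partial^2_{xx})BS$ is built only from $x$-derivatives of $BS$, the second identity upgrades to $\partial_kH=-\partial_xH$, so that $G=\partial_kH-H=-(\partial_x+1)H$. This is the observation that makes the lemma tractable.

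Next I would compute $H$ explicitly, using the identity above to trade every factor $e^{\sigma^2(T-s)}e^{k-x}N'(d_1)$ for $N'(d_2)$. I expect
$$H=e^{\sigma^2(T-s)}e^{k-x}N(d_1)-\frac{3}{2\sigma\sqrt{T-s}}N'(d_2)-\frac{d_1}{2\sigma^2(T-s)}N'(d_2),$$
whose leading term is of order $(T-s)^{-1}$. For (\ref{o1}) the two $N'(d_2)$-terms are controlled pathwise, since $c_1\le v_s\le c_2$ by Hypothesis 1 and $|P(d_2)N'(d_2)|$ is bounded for any polynomial $P$ (here $z\mapsto ze^{-z^2/2}$ is bounded). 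The only term not bounded pathwise is $e^{v_s^2(T-s)}e^{k-X_s}N(d_1)$; here $e^{v_s^2(T-s)}\le e^{c_2^2}$ and $N(d_1)\le 1$, so its $L^p$-norm reduces to controlling $\E(e^{-p(X_s-k)})$. Because $\sigma$ is bounded, the negative exponential moments of $S_s=e^{X_s}$ are finite and bounded uniformly in $s\le T\le 1$ (conditioning on $\mathcal F^{W'}$ and dominating the resulting Gaussian factor by boundedness), which yields $C_1(p)(T-s)^{-1}$.

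Finally I would compute $G=-(\partial_x+1)H$. The key point is that the $e^{k-x}N(d_1)$ terms generated by $-\partial_xH$ and by $-H$ cancel exactly, leaving
$$G=\frac{N'(d_2)}{2\sigma\sqrt{T-s}}+\frac{(d_1-3d_2)N'(d_2)}{2\sigma^2(T-s)}+\frac{(1-d_1d_2)N'(d_2)}{2\sigma^3(T-s)^{3/2}},$$
i.e. a finite sum of terms of the form (polynomial in $d_1,d_2$)$\,\times N'(d_2)\times(T-s)^{-j/2}$. Since every polynomial-times-Gaussian factor is bounded and $v_s$ is bounded below, this is precisely why (\ref{o2}) and (\ref{o3}) are genuine pathwise bounds rather than $L^p$ bounds. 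Each further $x$-derivative raises the polynomial degree by one (still bounded against $N'(d_2)$) and contributes one factor $\partial_xd_2=(\sigma\sqrt{T-s})^{-1}$, i.e. one power $(T-s)^{-1/2}$; reading off the highest-order term then gives $|\partial_xG|\le C_2(T-s)^{-2}$ and $|(\partial^3_{xxx}-\partial^2_{xx})G|\le C_3(T-s)^{-3}$.

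I expect the main obstacle to be organizational rather than conceptual: carrying out the repeated differentiations and, above all, verifying the exact cancellation of the $e^{k-x}N(d_1)$ terms in $G$, since it is this cancellation that separates the pathwise estimates (\ref{o2})--(\ref{o3}) from the $L^p$ estimate (\ref{o1}). The only genuinely probabilistic input is the uniform bound on the negative exponential moments of $S_s$ used for (\ref{o1}), which rests entirely on the two-sided bound on $\sigma$ in Hypothesis 1.
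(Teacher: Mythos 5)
Your proposal is correct and follows essentially the same route as the paper: explicit differentiation of the inverse-option Black--Scholes function, pathwise control of the polynomial-times-Gaussian terms using that $z^ae^{-bz^2}$ is bounded and that $v_s$ is bounded above and below, and an $L^p$ bound on the single $e^{k-x}N(d_1)$-type term via the exponential moments of $X_s$ guaranteed by Hypothesis 1. Your identities $\partial_k BS=-\partial_x BS$ and $(\partial_x+1)\bigl(e^{\sigma^2(T-s)}e^{k-x}N(d_1)\bigr)=N'(d_2)/(\sigma\sqrt{T-s})$ are a cleaner structural explanation of why $G$ and its $x$-derivatives admit pathwise (rather than only $L^p$) bounds, a cancellation the paper instead exhibits by direct computation of $\partial_x G$.
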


\begin{proof} We start proving (\ref{o1}). Straightforward differentiation gives us the following 
\begin{align*}
\begin{split}
H(s,x,k,v_s) &=-\frac{\left(2 k-3 v_s ^2 (T-s)-2 x\right) \exp \left(-\frac{\left(2 k+v_s ^2 (T-s)-2 x\right)^2}{8 v_s ^2 (T-s)}\right)}{2 \sqrt{\pi }v_s ^3 (T-s)^{3/2}}\\
  &\qquad  -\frac{\sqrt{\pi }v_s ^3 (T-s)^{3/2}e^{k+v_s ^2 (T-s)-x}\text{Erfc}\left(\frac{2 k+3 v_s ^2 (T-s)-2 x}{2 \sqrt{2}v_s 
   \sqrt{T-s}}\right)}{2 \sqrt{\pi }v_s ^3 (T-s)^{3/2}}
\end{split}
\end{align*}
where $\text{Erfc}(z)=\frac{2}{\sqrt{\pi}}\int_z^{\infty} e^{-t^2}dt$ if $z \geq 0$ and $\text{Erfc}(z)=\frac{2}{\sqrt{\pi}}\int_{-\infty}^{z} e^{-t^2}dt$ if $z < 0$. 

Using Hypothesis 1, the fact that for all $a>0$ and $b>0$ the function $z^a e^{-b z^2}$ is bounded, and that $T<1$ it is easy to see that the first term is bounded by $c (T-s)^{-1}$.

For the second term, we use the fact that the function Erfc is bounded and that $e^{\sup_{s \in [0,T]} \vert X_s \vert}$ has bounded moments of all orders by Hypothesis 1. This completes the proof of (\ref{o1}).

We next prove (\ref{o2}).
Straightforward differentiation gives
\begin{equation*}
\begin{split}
\partial_{x}G(s,x,k,v_{s}) &= \exp \left(-\frac{\left(2(k-x)+(T-s) v_s^2\right){}^2}{8 (T-s) v_s^2}\right)   \\
   &\times \bigg(\frac{8 (k-x)^3-2 (k-x+2) (T-s)^2 v_s^4}{16 \sqrt{2 \pi } (T-s)^{7/2} v_s^7} \\
   &+ \frac{4 (k-x-6) (k-x) (T-s) v_s^2-(T-s)^3 v_s^6}{16 \sqrt{2 \pi } (T-s)^{7/2} v_s^7}\bigg).
\end{split}
\end{equation*}
Then, due to Hypothesis \ref{Hyp1}, we get that
\begin{equation*}
\begin{split}
&\left|\partial_{x}G(s,x,k,v_{s})\right|  \leq \frac{1}{(T-s)^{2} v_s^4}\exp \left(-c_0\left(y+v_s\sqrt{T-s}\right)^2\right)  \\
& \quad \qquad \times \Biggl( c_1\left|y\right|^3 + 
c_2\left|y \right|^2 + c_3\left| y\right|+c_4(T-s)^{3/2} v_s^3 \Biggr) \\
&\quad  \leq \frac{1}{(T-s)^{2} v_s^4}\exp \left(-c_0\left(y+v_s\sqrt{T-s}\right)^2\right)  \\ 
& \quad \qquad \times \Biggl( c_1'\left(\left|y+v_s\sqrt{T-s}\right|^3 +\left(v_s\sqrt{T-s}\right)^3\right) 		\\
&\qquad \qquad + 
c_2'\left(\left|y +v_s\sqrt{T-s}\right|^2  +\left(v_s\sqrt{T-s}\right)^2\right)\\
& \qquad\qquad  + c_3'\left(\left| y+v_s\sqrt{T-s}\right|+v_s\sqrt{T-s}\right)+c_4'\Biggr),
\end{split}
\end{equation*}
where $y=\frac{\left(k-x\right)}{\sqrt{(T-s)}v_s}$.

Hence, using the fact that for all $a \geq 0$ and $b>0$ the function $z^a e^{-b z^2}$ is bounded and $T<1$, we conclude that
(\ref{o2}) holds true.

Finally, we have that
\begin{equation*}
\begin{split}
&\left|(\partial_{xxx}^3-\partial_{xx}^2)G(s,x,k,v_{s})\right| \leq \frac{1}{(T-s)^{3} v_s^6}\exp \left(-c_0 \left(y+v_s\sqrt{T-s}\right)^2\right)  \\
& \qquad \times  \Biggl( c_1 |y|^5 + c_2|y|^4 + c_3 |y|^3 +c_4 |y|^2 +c_5 |y| +c_6 (T-s)^{7/2} v_s^5 \Biggr),
\end{split}
\end{equation*}
and the same argument as above allows us to complete the proof (\ref{o3}), and thus the proof of the Lemma.
\end{proof}

The first result of this section is the decomposition of the price of inverse options. The proof follows exactly the same lines as Theorem 4.2 in Alòs {\it{et al.}} \cite{Alos2007b}. See also Theorem 6 in Alòs \cite{Alos2006a} and Theorem 26 in Alòs 
{\it{et al.}} \cite{Alos2022a}. Note that the derivatives of the Black-Scholes formula for inverse options satisfy the same equations, (\ref{bspde}) and (\ref{gvd}), which is the key ingredient of the proof, together with the anticipating Itô's formula. For completeness, we provide a sketch of the proof, as the derivatives of the Black-Scholes formula are different. Thus, Lemma \ref{bound} (instead of Lemma 2 in \cite{Alos2007b}) is needed to verify that all the integrals are well-defined.
\begin{theorem} \label{decomposition}
Assume Hypotheses 1-4. Then, the following relation holds 
\begin{equation*}
\begin{split}
V_0= \mathbb{E}\left(BS(0,X_0,k,v_0)\right)+\mathbb{E}\left(\int_0^T H(s,X_s,k,v_s)\sigma_s \left(\int_s^T D_s^W\sigma_r^2dr\right) ds\right),
\end{split}
\end{equation*}
where recall that $H(s,x,k,v_s)=\frac{1}{2}(\partial_{xxx}^3BS(s,x,k,v_s)-\partial_{xx}^2BS(s,x,k,v_s))$.
\end{theorem}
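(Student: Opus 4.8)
The plan is to apply the anticipating Itô formula to the process $BS(t,X_t,k,v_t)$ on $[0,T]$ and then take expectations, mimicking the proof of Theorem 4.2 in \cite{Alos2007b}. Since $V_T=BS(T,X_T,k,v_T)$ and $V_0=\E(V_T)$, the decomposition will follow once I show that, in the expansion of $BS(T,X_T,k,v_T)-BS(0,X_0,k,v_0)$, every finite-variation term cancels except the stated Malliavin correction, while the stochastic-integral term has zero mean. The feature that forces Malliavin calculus rather than the classical Itô formula is that $v_t$ is \emph{not} adapted to the filtration of $W$: because $v_t^2(T-t)=\int_t^T\sigma_s^2\,ds$ involves the whole future of $\sigma$ (hence of $W'$), the integrand $BS(t,X_t,k,v_t)$ anticipates the Brownian motion $W$ driving $X_t$.

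First I would record the pathwise dynamics of the auxiliary processes. Writing $\Psi_t=v_t^2(T-t)=\int_t^T\sigma_s^2\,ds$, this is absolutely continuous in $t$ with $d\Psi_t=-\sigma_t^2\,dt$, and a direct computation gives
\begin{equation*}
dv_t=\frac{v_t^2-\sigma_t^2}{2v_t(T-t)}\,dt,
\end{equation*}
so $v_t$ has finite variation in $t$ and no martingale part. The relevant Malliavin derivative of the anticipating argument is $D_s^W v_t=\frac{1}{2v_t(T-t)}\int_t^T D_s^W\sigma_r^2\,dr$ for $s\le t$, which together with Hypotheses 3--4 places $v$ in the domain of the anticipating Itô formula.

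Next I would apply the anticipating Itô formula (see Appendix \ref{MCintro}). Using $dX_t=\sigma_t\,dW_t-\tfrac12\sigma_t^2\,dt$ and $d\langle X\rangle_t=\sigma_t^2\,dt$, the expansion produces a Skorohod integral $\int_0^T\partial_x BS\,\sigma_t\,dW_t$ (zero expectation), the finite-variation terms
\begin{equation*}
\int_0^T\Big(\partial_t BS-\tfrac12\sigma_t^2\partial_x BS+\tfrac12\sigma_t^2\partial_{xx}^2 BS\Big)\,dt+\int_0^T\partial_\sigma BS\,dv_t,
\end{equation*}
and the anticipating correction $\int_0^T\partial_{x\sigma}^2 BS\,\sigma_t\,(D_t^W v_t)\,dt$. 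The first parenthesis is rewritten with the Black--Scholes PDE (\ref{bspde}), which holds at the frozen volatility $\sigma=v_t$, giving $\partial_t BS=\tfrac12 v_t^2(\partial_x BS-\partial_{xx}^2 BS)$; inserting $dv_t$ from above and the Gamma--Vega--Delta identity (\ref{gvd}) in the form $\partial_\sigma BS=v_t(T-t)(\partial_{xx}^2 BS-\partial_x BS)$, I would show the two finite-variation integrals reduce to $\pm\tfrac12\int_0^T(v_t^2-\sigma_t^2)(\partial_x BS-\partial_{xx}^2 BS)\,dt$ and hence cancel. Differentiating (\ref{gvd}) in $x$ yields $\partial_{x\sigma}^2 BS=2v_t(T-t)\,H(t,X_t,k,v_t)$, so the correction collapses to $\int_0^T H(t,X_t,k,v_t)\,\sigma_t\big(\int_t^T D_t^W\sigma_r^2\,dr\big)\,dt$, exactly the integrand in the statement. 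Taking expectations removes the Skorohod integral and gives the claimed identity.

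The main obstacle is rigor rather than algebra: one must justify that every object above is well-defined and integrable up to maturity, since the derivatives of $BS$ blow up as $t\to T$. This is precisely where Lemma \ref{bound} enters (in place of Lemma 2 of \cite{Alos2007b}), because the $BS$-derivatives for inverse options differ from the classical ones. Combining $\E(|H|^p)^{1/p}\le C_1(p)(T-s)^{-1}$ with the bound $\int_t^T D_t^W\sigma_r^2\,dr=O((T-t)^{H+\frac12})$ from Hypothesis 4, the correction integrand is $O((T-t)^{H-\frac12})$, hence integrable near $T$ for every $H\in(0,1)$. I would make this precise by applying the formula first on $[0,T-\varepsilon]$, where all regularity requirements of the anticipating Itô formula are met by Hypotheses 1--4, and then letting $\varepsilon\to0$ by dominated convergence using these bounds.
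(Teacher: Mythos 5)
Your proposal is correct and follows essentially the same route as the paper: apply the anticipating It\^o formula to $BS(t,X_t,k,v_t)$, cancel the drift terms via the PDE (\ref{bspde}) and the identity (\ref{gvd}) (your direct substitution of $dv_t$ is just a reorganization of the paper's add-and-subtract step), identify the $\partial^2_{x\sigma}BS$ correction as the $H$-term, take expectations to kill the Skorohod integral, and control integrability near $T$ with Lemma \ref{bound} and Hypothesis 4. Your $[0,T-\varepsilon]$ localization plus dominated convergence plays the role of the truncation argument the paper borrows from Theorem 4.2 of \cite{Alos2007b}, so no gap remains.
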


\begin{proof}  
Since \(V_T=BS(T,X_T,k,v_T)\), the law of one price leads us to the conclusion that $V_0=\mathbb{E}(BS(T,X_T,k,v_T))$. Then, we apply the anticipating It\^o's formula of Theorem \ref{aito} in Appendix A to the function $BS(t,X_t,k,v_t)$, observing that $v_t=\sqrt{\frac{Y_t}{T-t}}$ with $Y_t=\int_t^T \sigma^2_s ds$ for $t<T$. That is, $F(t,X_t,Y_t)=BS(t,X_t,k,\sqrt{\frac{Y_t}{T-t}})$.  Then, we get that 
\begin{align*}
\begin{split}
BS(T,X_T,k,v_T) &= BS(0,X_0,k,v_0)+\int_0^T \partial_s BS(s,X_s,k,v_s)ds\\
  & + \int_0^T \partial_x BS(s,X_s,k,v_s)\left(-\frac{1}{2}\sigma_s^2ds + \sigma_sdW_s\right)\\
  & + \int_0^T \partial_{\sigma}BS(s,X_s,k,v_s)\left(\frac{v_s^2}{2(T-s)v_s}-\frac{\sigma_s^2}{2(T-s)v_s}\right)ds  \\
  & + \int_0^T \partial^2_{\sigma x}BS(s,X_s,k,v_s)\frac{\sigma_s}{2(T-s)v_s}\left(\int_s^T D_s^W\sigma_r^2dr\right)ds\\
  & +\frac{1}{2}\int_0^T\partial_{xx}^2 BS(s,X_s,k,v_s) \sigma_s^2 ds.
\end{split}
\end{align*}
By adding and subtracting
\(\frac{1}{2}\int_0^T v_s^2(\partial_{xx}^2BS(s,X_s,k,v_s)-\partial_xBS(s,X_s,k,v_s))ds\)
to the expression above we get that 
\begin{align*}
\begin{split}
&BS(T,X_T,k,v_T) = BS(0,X_0,k,v_0)\\
&\quad +\int_0^T \left(\partial_s BS(s,X_s,k,v_s)-\frac{1}{2}v_s^2 \partial_xBS(s,X_s,k,v_s) + \frac{1}{2}v_s^2 \partial^2_{xx}BS(s,X_s,k,v_s) \right)ds\\
  &\quad + \int_0^T \partial_x BS(s,X_s,k,v_s)\sigma_sdW_s- \int_0^T \partial_{\sigma}BS(s,X_s,k,v_s)\frac{\sigma_s^2-v_s^2}{2(T-s)v_s}ds  \\
  &\quad + \int_0^T \partial^2_{\sigma x}BS(s,X_s,k,v_s)\frac{\sigma_s}{2(T-s)v_s}\left(\int_s^T D_s^W\sigma_r^2dr\right) ds \\
  & \quad +\frac{1}{2}\int_0^T (\partial_{xx}^2BS(s,X_s,k,v_s)-\partial_xBS(s,X_s,k,v_s))(\sigma_s^2-v_s^2)ds.
\end{split}
\end{align*}
Notice that the second term in the above expression is equal to zero due to formula \eqref{bspde}. Finally, using equation \eqref{gvd} and taking expectation, we complete the proof. Observe that by Lemma \ref{bound} and Hypotheses 1-4 all expectations are finite. In fact, using Cauchy-Schwarz inequality, and Hypotheses 1 and (\ref{d1}), we get that
\begin{equation*}
\begin{split}
&\bigg\vert\mathbb{E}\left(\int_0^T H(s,X_s,k,v_s)\sigma_s \left(\int_s^T D_s^W\sigma_r^2dr\right) ds\right)\bigg\vert\\
& \quad \leq C\int_0^T\left(\E(\left|H(s,X_s,k,v_{s})\right|^2)\right)^{1/2} (T-s)^{1/2} \left(\int_s^T \E \left( \vert D_s^W\sigma_r \vert^2\right) dr \right)^{1/2} ds\\
& \quad \leq C \int_0^T (T-s)^{-1} (T-s)^{1/2} (T-s)^H ds=CT^{H+\frac12}.
\end{split}
\end{equation*} 
We also observe that since the function $BS$ and its derivatives are
not bounded, exactly the same truncation argument of Theorem 4.2 in \cite{Alos2007b} can be used here in order to apply Theorem \ref{aito} in Appendix A.
\end{proof}

We next derive an expression for the ATMIV skew of an Inverse European call option under the stochastic volatility model (\ref{B_Epm}). The proof follows similarly to Theorem 4.2 in Alòs {\it{et al.}} \cite{Alos2007b}, but with the use of properties (\ref{p3}) and (\ref{derb}) of the Black-Scholes function for Inverse options, which differ from the standard Black-Scholes function.
\begin{proposition}  \label{skew}
Assume Hypotheses 1-4. Then, 
\begin{align}
\begin{split}
&\lim_{T\to 0} T^{\max(\frac12-H, 0)}\partial_kI(0,k^*) \\
&\qquad = \lim_{T\to 0}T^{\max(\frac12-H, 0)}\frac{\mathbb{E}\left(\int_0^T ( \partial_k H(s,X_s,k^*,v_s) - H(s,X_s,k^*,v_s))\Lambda_s ds\right)}{\partial_{\sigma}BS(0,X_0,k^*,I(0,k^{*}))},
\label{PropSkew}
\end{split}
\end{align}
where $\Lambda_s=\sigma_s\int_s^T D_s^W\sigma_r^2dr$, provided that both limits exist.
\end{proposition}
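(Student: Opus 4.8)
The plan is to reduce the skew to an explicit ratio by differentiating the defining relation of the implied volatility, then to insert the price decomposition of Theorem~\ref{decomposition} and to show that every term other than the one displayed in \eqref{PropSkew} is negligible after multiplication by $T^{\max(\frac12-H,0)}$. First I would differentiate the identity $V_0=BS(0,X_0,k,I(0,k))$ in the log-strike $k$. By the analysis of Appendix B, for $T$ small the map $\sigma\mapsto BS(0,X_0,k,\sigma)$ is strictly monotone in a neighbourhood of $\sigma_0$, so $I(0,k)$ is well defined and $(\partial_\sigma BS)(0,X_0,k,I(0,k))\neq 0$; the chain rule then gives
\begin{equation*}
\partial_kI(0,k)=\frac{\partial_kV_0-(\partial_kBS)(0,X_0,k,I(0,k))}{(\partial_\sigma BS)(0,X_0,k,I(0,k))}.
\end{equation*}
At $k=k^*$ the denominator is exactly that of \eqref{PropSkew}, so the task is to identify the numerator with $\mathbb{E}(\int_0^T G(s,X_s,k^*,v_s)\Lambda_s\,ds)$ up to an error that disappears after the stated scaling.

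Next I would differentiate the decomposition of Theorem~\ref{decomposition} in $k$. Interchanging $\partial_k$ with the expectation and the $ds$-integral, which is justified as in the proof of Theorem~\ref{decomposition} by Lemma~\ref{bound} and Hypotheses~1--4, yields $\partial_kV_0=\mathbb{E}((\partial_kBS)(0,X_0,k,v_0))+\mathbb{E}(\int_0^T\partial_kH(s,X_s,k,v_s)\Lambda_s\,ds)$. Combining Theorem~\ref{decomposition} with $V_0=BS(0,X_0,k,I)$ gives the auxiliary identity $\mathbb{E}(\int_0^TH(s,X_s,k,v_s)\Lambda_s\,ds)=BS(0,X_0,k,I)-\mathbb{E}(BS(0,X_0,k,v_0))$, and subtracting $(\partial_kBS)(0,X_0,k^*,I)$ the numerator rearranges, at $k=k^*$, as
\begin{equation*}
\partial_kV_0-(\partial_kBS)(0,X_0,k^*,I)=\mathbb{E}\Big(\int_0^TG(s,X_s,k^*,v_s)\Lambda_s\,ds\Big)+R,
\end{equation*}
where $G=\partial_kH-H$ is the function of Lemma~\ref{bound} and $R=\mathbb{E}(\psi(v_0))-\psi(I)$ is a frozen-volatility remainder, with $\psi(\sigma):=(\partial_kBS-BS)(0,X_0,k^*,\sigma)$. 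At this step I would invoke properties (\ref{p3}) and (\ref{derb}) of the inverse Black--Scholes price, which link its $k$- and $x$-derivatives, both to carry out the rearrangement and to put $\psi$ and $G$ in a form suitable for the estimates below.

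It then remains to prove $T^{\max(\frac12-H,0)}R/(\partial_\sigma BS)(0,X_0,k^*,I)\to 0$. Using \eqref{gvd} and property (\ref{p3}) together with the at-the-money study of Appendix B, the vega is of order $\sqrt{T}$ and stays bounded away from $0$ for $T$ small. For the remainder I would apply the mean value theorem in $\sigma$ around $\sigma_0$, bound $\sup|\psi'|\le C\sqrt{T}$ (again from \eqref{gvd}, (\ref{p3}) and (\ref{derb})), and control the fluctuations by $\mathbb{E}|v_0-\sigma_0|\le CT^{\gamma}$ (Hypothesis~2) and a rate $|I-\sigma_0|\le C(T^{\gamma}+T^{H})$ for the level, obtained from the same decomposition and the lower bound on the vega, so that $|R|\le C(T^{\gamma+\frac12}+T^{H+\frac12})$. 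Since $\gamma>0$, this bound is $o(T^{\frac12-\max(\frac12-H,0)})$ for every $H\in(0,1)$, whence $T^{\max(\frac12-H,0)}R/(\partial_\sigma BS)(0,X_0,k^*,I)\to 0$; granting that one of the two limits exists, the scaled limit of $\partial_kI(0,k^*)$ then coincides with that of the right-hand side of \eqref{PropSkew}.

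The main obstacle lies entirely in the last paragraph, and it has two facets. The delicate point is the lower bound on the vega $(\partial_\sigma BS)(0,X_0,k^*,I)$: because the inverse Black--Scholes price is non-monotone in $\sigma$ (see Figure~\ref{fig9}), this quantity can vanish, and only the small-$T$ monotonicity secured in Appendix B guarantees that the denominator is of exact order $\sqrt{T}$ and bounded below. The second facet is the sharp control of $\psi'$ and of the fluctuations of $v_0$ and $I$ around $\sigma_0$, which is what certifies that $R$ is of strictly smaller order than the leading term $\mathbb{E}(\int_0^TG\,\Lambda_s\,ds)$, of order $T^{H}$ by Lemma~\ref{bound}, rather than merely comparable to it.
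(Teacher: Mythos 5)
Your proposal is correct and follows essentially the same route as the paper: differentiate $V_0=BS(0,X_0,k,I(0,k))$ in $k$, differentiate the decomposition of Theorem~\ref{decomposition} in $k$, use identity (\ref{p3}) to turn the numerator into $\mathbb{E}\bigl(\int_0^T G(s,X_s,k^*,v_s)\Lambda_s\,ds\bigr)$ plus an $\mathrm{Erfc}$-difference remainder, and kill that remainder against the vega of order $\sqrt{T}$ from (\ref{derb}). The only cosmetic difference is that you establish quantitative rates $T^{\gamma}$ and $T^{H}$ for $|v_0-\sigma_0|$ and $|I(0,k^*)-\sigma_0|$, whereas the paper only needs the qualitative limits from (\ref{main1}) and continuity.
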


\begin{proof} Since \(V_0=BS(0,X_0,k^*,I(0,k^*))\), the following equation
holds \begin{equation*}
\partial_kV_0=\partial_kBS(0,X_0,k^*,I(0,k^*))+\partial_{\sigma}BS(0,X_0,k^*,I(0,k^*))\partial_kI(0,k^*).
\end{equation*}
On the other hand, using Theorem \ref{decomposition}, we 
get that \begin{equation*}
\partial_kV_0=\partial_k\mathbb{E}\left(BS(0,X_0,k^*,v_0)\right)+\mathbb{E}\left(\int_0^T \partial_k H(s,X_s,k^*,v_s)\Lambda_s ds\right).
\end{equation*}
Notice that by dominated convergence, we have that
$$
\partial_k\mathbb{E}\left(BS(0,X_0,k^*,v_0)\right)=\mathbb{E}\left(\partial_kBS(0,X_0,k^*,v_0)\right),
$$
where
\begin{equation} \label{p3}
\partial_kBS(0,x,k^*,\sigma)= BS(0,x,k^*,\sigma) - \frac{1}{2}\text{Erfc}\left(\frac{\sqrt{T} \sigma}{2\sqrt{2}}\right).
\end{equation}
Combining the above equations, we find that the volatility skew \(\partial_kI(0,k^*)\) is equal to \begin{equation*}
\begin{split}
&(\partial_{\sigma}BS(0,X_0,k^*,I(0,k^*)))^{-1}\bigg(\mathbb{E}\left(\int_0^T \partial_k H(s,X_s,k^*,v_s)\Lambda_s ds\right)+\mathbb{E}\left(\partial_kBS(0,X_0,k^*,v_0)\right)\\
&\qquad -\partial_kBS(0,X_0,k^*,I(0,k^*))\bigg).
\end{split}
\end{equation*}
Furthermore, using (\ref{p3})
and Theorem \ref{decomposition} we obtain that
\begin{equation*}
\begin{split}
&\mathbb{E}\left(\partial_k BS(0,X_0,k^*,v_0)\right) - \partial_kBS(0,X_0,k^*,I(0,k^*)) \\
&=\left( \mathbb{E}( BS(0,X_0,k^*,v_0)) - V_0 \right)+\frac{1}{2}\left(\text{Erfc}\left(\frac{\sqrt{T} I(0,k^*)}{2\sqrt{2}}\right) - \text{Erfc}\left(\frac{\sqrt{T} v_0}{2\sqrt{2}}\right)\right)\\
&=-\mathbb{E}\left(\int_0^T  H(s,X_s,k^\ast,v_s)\Lambda_sds\right) +\frac{1}{2}\left(\text{Erfc}\left(\frac{\sqrt{T} I(0,k^*)}{2\sqrt{2}}\right) - \text{Erfc}\left(\frac{\sqrt{T} v_0}{2\sqrt{2}}\right)\right).
\end{split}
\end{equation*}

Straightforward differentiation gives us the following expression
\begin{equation} \label{derb}
\begin{split}
\partial_{\sigma}BS(0,X_0,k^*,\sigma)=-\sigma T e^{\sigma ^2T} \text{Erfc}\left(\frac{3 \sigma  \sqrt{T}}{2 \sqrt{2}}\right)+\frac{ e^{-\frac{1}{8} \sigma ^2
   T}\sqrt{T}}{\sqrt{2 \pi }}.
\end{split}
\end{equation}

By (\ref{main1}), $\lim_{T\to 0}I(0,k^{\ast})= \sigma_0$. Moreover, by continuity, we have that $\lim_{T \rightarrow 0} v_0=\sigma_0$.
Thus, $\lim_{T\to 0}I (0,k^{\ast})= v_0$. Thus, since $\text{Erfc}(\sqrt{T}z)=\frac{2\sqrt{T}}{\pi} \int_z^{\infty} e^{-t^2T} dt$, we conclude that
$$
\lim_{T \rightarrow 0} T^{\max(\frac12-H, 0)} \frac{\text{Erfc}\left(\frac{\sqrt{T} I(0,k^*)}{2\sqrt{2}}\right) - \text{Erfc}\left(\frac{\sqrt{T} v_0}{2\sqrt{2}}\right)}{\partial_{\sigma}BS(0,X_0,k^*,I(0, k^{\ast}))}=0,
$$
which completes the proof.
\end{proof}

To compute the limit of the skew slope of the ATMIV, we need to identify the leading-order terms of the numerator in equation \eqref{PropSkew}. The following decomposition formula will be crucial in achieving this goal. The proof follows similarly to that in Alòs {\it{et al.}} \cite{Alos2007b}. For completeness, we provide a sketch of the proof.
\begin{proposition}\label{skew2}
Assume Hypotheses 1-4.  Then, 
\begin{align*}
\begin{split}
\mathbb{E}\left(\int_0^T G(s,X_s,k,v_s)\Lambda_s ds\right)&=\mathbb{E}\left( G(0,X_0,k,v_0) J_0\right)\\
&+\mathbb{E}\left(\frac{1}{2}\int_0^T (\partial_{xxx}^3-\partial_{xx}^2)G(s,X_s,k,v_s) J_s\Lambda_sds\right)\\
&+ \mathbb{E}\left(\int_0^T \partial_{x}G(s,X_s,k,v_s)\sigma_s D^{-}J_sds\right),
\end{split}
\end{align*}
where $J_s=\int_s^T \Lambda_r dr$ and $D^{-}J_s =\int_s^T D_s^W \Lambda_r dr$.
\end{proposition}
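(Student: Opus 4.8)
The plan is to apply the anticipating It\^o formula (Theorem \ref{aito}) to the product $G(s,X_s,k,v_s)\,J_s$ and to exploit the boundary condition $J_T=0$, following the same scheme as the proof of Theorem \ref{decomposition}. The starting observation is that $J_s=\int_s^T\Lambda_r\,dr$ is absolutely continuous in $s$ with $\frac{d}{ds}J_s=-\Lambda_s$, so $\Lambda_s\,ds=-dJ_s$ and the left-hand side equals $-\E\big(\int_0^T G(s,X_s,k,v_s)\,dJ_s\big)$. Writing $G(s,X_s,k,v_s)=F(s,X_s,Y_s)$ with $Y_s=\int_s^T\sigma_u^2\,du$ and $v_s=\sqrt{Y_s/(T-s)}$, I would apply the formula to $F(s,X_s,Y_s)\,J_s$ on $[0,T]$, treating $J_s$ as a second anticipating argument. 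Since $J_T=0$ the terminal value vanishes and the initial value is $G(0,X_0,k,v_0)\,J_0$, so the left-hand side of the It\^o formula is $-G(0,X_0,k,v_0)\,J_0$; this produces the first term of the statement.

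Next I would organize the resulting integrals into three groups. The terms in which $J_s$ multiplies the It\^o differential of $G(s,X_s,k,v_s)$ are handled exactly as in Theorem \ref{decomposition}: after adding and subtracting $\frac12\int_0^T v_s^2(\partial_{xx}^2-\partial_x)G\,J_s\,ds$, the combination $\partial_s+\frac12 v_s^2(\partial_{xx}^2-\partial_x)$ annihilates $G$ by the PDE \eqref{bspde}, the Vega term cancels against the $(\sigma_s^2-v_s^2)$ term through \eqref{gvd}, and the cross term produced by the anticipating variable $Y_s$ is rewritten, using $\partial_{\sigma x}^2 G=v_s(T-s)(\partial_{xxx}^3-\partial_{xx}^2)G$, as $\frac12(\partial_{xxx}^3-\partial_{xx}^2)G\,J_s\,\Lambda_s$; this gives the second term of the statement together with a Skorohod integral $\int_0^T J_s\,\partial_x G\,\sigma_s\,dW_s$. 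The term in which $G$ multiplies $dJ_s$ returns $-\int_0^T G(s,X_s,k,v_s)\,\Lambda_s\,ds$, i.e. the left-hand side. Finally, the trace term produced by the interaction of $dX_s=\sigma_s\,dW_s-\frac12\sigma_s^2\,ds$ with the anticipating weight $J_s$ is $\int_0^T\partial_x G(s,X_s,k,v_s)\,\sigma_s\,D_s^W J_s\,ds$, and since $D_s^W J_s=\int_s^T D_s^W\Lambda_r\,dr=D^{-}J_s$ this is precisely the third term. Taking expectations, the Skorohod integral has zero mean, and collecting the remaining contributions yields the claimed identity.

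The technical work, which I would only sketch, is twofold. First, one must justify the anticipating It\^o formula for a product carrying the anticipating weight $J_s$; the essential point is that $J_s$ contributes \emph{both} a finite-variation term $G\,dJ_s$ and a genuinely anticipating trace term $\partial_x G\,\sigma_s\,D^{-}J_s$, and that no further trace term arises between $Y_s$ and $J_s$ because both are of finite variation in $s$. As in Theorem 4.2 of \cite{Alos2007b}, since $G$ and its derivatives are unbounded the formula must be applied after a localization argument followed by a passage to the limit. Second, one must check that every integral is finite. From \eqref{o2} and \eqref{o3} in Lemma \ref{bound} one has $|\partial_x G|\le C(T-s)^{-2}$ and $|(\partial_{xxx}^3-\partial_{xx}^2)G|\le C(T-s)^{-3}$, while \eqref{d1}--\eqref{d2} and Hypothesis 1 give $\Lambda_s=O((T-s)^{H+1/2})$, $J_s=O((T-s)^{H+3/2})$ and $D^{-}J_s=O((T-s)^{2H+1})$ in every $L^p$. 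Hence both the second and the third integrands are of order $(T-s)^{2H-1}$, which is integrable on $[0,T]$ for every $H\in(0,1)$; this is the estimate that legitimizes the limiting manipulations and the vanishing of the Skorohod integral. I expect the main obstacle to be the careful bookkeeping of the trace term $D^{-}J_s$ in the anticipating It\^o formula for the product, rather than any individual estimate.
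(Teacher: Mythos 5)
Your proposal is correct and follows essentially the same route as the paper: the paper likewise applies the anticipating It\^o formula to $G(s,X_s,k,v_s)J_s$, uses $J_T=0$, adds and subtracts $\frac12\int_0^T v_s^2(\partial_{xx}^2-\partial_x)G\,J_s\,ds$ so that the PDE \eqref{bspde} and the relation \eqref{gvd} cancel the drift and Vega terms, identifies the two trace terms (the $Y$-trace giving $\frac12(\partial_{xxx}^3-\partial_{xx}^2)G\,J_s\Lambda_s$ and the $J$-trace giving $\partial_xG\,\sigma_s D^-J_s$), and concludes by taking expectations after the same truncation argument. Your integrability estimates of order $(T-s)^{2H-1}$ also match the bounds the paper derives from Lemma \ref{bound} and Hypotheses 1--4.
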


\begin{proof}  Applying Theorem \ref{aito} to the function
\((\partial_k H(0,X_0,k,v_0) - H(0,X_0,k,v_0))\int_0^T \Lambda_s ds\) and recalling that $G(s,x,k,v_s)=\partial_k H(s,x,k,v_s) - H(s,x,k,v_s)$, we obtain  
\begin{align*}
&\int_0^TG(s,X_s,k,v_s)\Lambda_sds = G(0,X_0,k,v_0)J_0 \\
  &\qquad +\int_0^T \left(\partial_sG(s,X_s,k,v_s)+\frac{v_s^2}{2(T-s)v_s}\partial_vG(s,X_s,k,v_s)\right)J_sds\\
  &\qquad  + \int_0^T \partial_x G(s,X_s,k,v_s)J_s
  \left(-\frac{1}{2}\sigma_s^2ds + \sigma_sdW_s \right)\\
  &\qquad   - \int_0^T \partial_vG(s,X_s,k,v_s)J_s\frac{\sigma_s^2}{2(T-s)v_s}ds   + \int_0^T \partial_{vx}^2G(s,X_s,k,v_s)J_s\Lambda_s\frac{1}{2(T-s)v_s}ds\\
  &\qquad  + \int_0^T \partial_{x}G(s,X_s,k,v_s)\sigma_s D^{-}J_sds+\frac{1}{2}\int_0^T \sigma_s^2\partial_{xx}^2 G(s,X_s,k,v_s)J_sds.
\end{align*}
By adding and subtracting the term
\(\frac{1}{2}\int_0^T v_s^2(\partial_{xx}^2G(s,X_s,k,v_s)-\partial_xG(s,X_s,k,v_s))ds\)
to the expression above we get that \begin{align*}
\begin{split}
&\int_0^TG(s,X_s,k,v_s)\Lambda_sds = G(0,X_0,k,v_0)J_0 \\
  &\qquad +\int_0^T (\partial_sG(s,X_s,k,v_s)+ \frac{1}{2}v_s^2(\partial_{xx}^2 G(s,X_s,k,v_s) - \partial_x G(s,X_s,k,v_s)) )J_sds\\
  &\qquad + \int_0^T \frac{1}{2}(\partial_{xx}^2 G(s,X_s,k,v_s) - \partial_x G(s,X_s,k,v_s))(\sigma_s^2-v_s^2)J_sds\\
  &\qquad -\int_0^T \partial_vG(s,X_s,k,v_s)\frac{\sigma_s^2-v_s^2}{2(T-s)v_s}J_sds+ \int_0^T \partial_x G(s,X_s,k,v_s)J_s\sigma_sdW_s\\
  &\qquad + \int_0^T \partial_{vx}^2G(s,X_s,k,v_s)J_s\Lambda_s\frac{1}{2(T-s)v_s}ds+ \int_0^T \partial_{x}G(s,X_s,k,v_s)\sigma_s D^{-}J_sds.
  \end{split}
\end{align*}

Next, equations \eqref{bspde} and (\ref{gvd}) imply that
\begin{align*}
\begin{split}
&\partial_s G(s,X_s,k,v_s)-\frac{1}{2}v_s^2 \partial_x G(s,X_s,k,v_s) + \frac{1}{2}v_s^2 \partial^2_{xx} G(s,X_s,k,v_s)=0, \\
&\partial_{xx}^2G(s,X_s,k,v_s)-\partial_xG(s,X_s,k,v_s)=\frac{\partial_{v}G(s,X_s,k,v_s)}{v_s(T-s)}.
\end{split}
\end{align*}
Finally, taking expectations and noticing that by Lemma \ref{bound} and Hypotheses 1-4 all expectations are finite, we complete the desired proof. Remark that as for Theorem \ref{decomposition}, the same truncation argument as in \cite{Alos2007b} can be used in order to apply Theorem \ref{aito} in Appendix A.
\end{proof}

\section{Proof of Theorem \ref{limskew}} 

\subsection{Proof of (\ref{main1}) in Theorem \ref{limskew}: ATMIV
level}

This section is devoted to the proof of (\ref{main1}) in Theorem \ref{limskew}. The proof follows similar ideas to those in Alòs and Shiraya \cite{Alos2019a}, but a detailed study of the inverse of the Black-Scholes function for Inverse options is required, which is provided in Appendix B.

\subsubsection{The uncorrelated case}

Notice that if $\rho=0$, Theorem \ref{decomposition} implies $V_{0}=\mathbb{E}\left( BS( 0,X_0,k^*,v_{0}) \right)$.
Then the implied volatility satisfies the following 
\begin{equation*}
\begin{split}
I^0( 0,k^*) &= BS^{-1}( k^*,V_{0})  = \mathbb{E}\left( BS^{-1}(k^*,\mathbb{E} BS( 0,X_0,k^*,v_{0}) ) \right)  \\
&=\mathbb{E} \left( BS^{-1}(k^*,\Phi_0) -BS^{-1}(k^*,\Phi_T ) \right) +\mathbb{E}\left(v_{0}\right),
\end{split}
\end{equation*}
where $\Phi _{r}:=\E\left( BS\left( 0,X_0,k^*,v_{0}\right) | \mathcal{F}_r^{W}\right)$.

We observe that as $\rho=0$, the two Brownian motions $W$ and $W'$ are independent. Thus, $\Phi_r = \E\left( BS\left( 0,X_0,k^*,v_{0}\right)| \mathcal{F}_r^{W'} \right)$ and $(\Phi_r)_{r \geq 0}$ is a martingale under the probability measure $\mathbb{P}$ with respect to filtration $(\mathcal{F}^{W'}_r)_{r \geq 0}$. By the martingale representation  theorem, there exists a square integrable and  $\mathcal{F}^{W'}$-adapted  process $(U_r)_{r\geq 0}$ such  that 
$$\Phi_r=\Phi_0+\int_0^r U_s dW_s'.$$ 
Clark-Ocone-Haussman formula (Theorem \ref{clar}) gives us the following representation, 
\begin{align}
\begin{split}
U_{r} &=\mathbb{E}\left( D_{r}^{W'}BS( 0,X_0,k^*,v_{0})| \mathcal{F}_r^{W'} \right)  = \mathbb{E}\left( \frac{\partial BS}{\partial \sigma}( 0,X_0,k^*,v_{0}) )D_{r}^{W'}v_0 | \mathcal{F}_r^{W'}\right)  \\
&=\mathbb{E}\left(\frac{\partial BS}{\partial \sigma}( 0,X_0,k^*,v_{0})\frac{\int_{r}^{T}D_{r}^{W'}\sigma _{s}^{2}ds}{2v_{0}}| \mathcal{F}_r^{W'}\right),
\nonumber
\end{split}
\end{align}
Then, a direct application of the classical It\^o's formula implies that
\begin{align*}
\begin{split}
\E\left( BS^{-1}(k^*,\Phi_0 )-BS^{-1}(k^*,\Phi_T )\right) &=-\E\left(\int_{0}^{T}( BS^{-1}) ^{\prime }( k^*, \Phi_r ) U_{r}dW_{r}'\right) \\
& -\E\left(\frac{1}{2}\int_{0}^{T}( BS^{-1}) ^{\prime \prime}( k^*, \Phi_r) U_{r}^{2}dr\right) \\
&=-\E\left(\frac{1}{2}\int_{0}^{T}( BS^{-1}) ^{\prime \prime}( k^*, \Phi_r) U_{r}^{2}dr\right),
\end{split}
\end{align*}
where $(BS^{{-1}})^{\prime}$ and $(BS^{{-1}})^{\prime \prime }$ denote, respectively, the first and  second  derivatives of $BS^{-1}$ with respect to $\sigma$.

Thus, since $T<1$ and by Hypothesis \ref{Hyp1} we get $\left|\partial_{\sigma}BS(0,X_0,k^*,v_0)\right| \leq C \sqrt{T}$.
By Lemma \ref{bound2} (see Appendix \ref{Greeks}) for $T$ sufficiently small, $\left|( BS^{-1}) ^{\prime \prime}( k^*, \Phi_r)\right| \leq C T^{-\frac{1}{2}}$.

Therefore, by Hypotheses \ref{Hyp1} and \ref{Hyp4}  and Cauchy-Schwarz inequality, we get that
\begin{align*}
\begin{split}
\left|\E\left(\int_{0}^{T}( BS^{-1}) ^{\prime \prime}( k^*, \Phi_r) U_{r}^{2}dr\right)\right| &\leq C T^{-1/2}\int_{0}^{T}\E (U_{r}^{2}) \, dr \\
& \leq C T^{-1/2}\int_{0}^{T} (T-r)^{2H+1}ds dr=C T^{2H+\frac32}.
\end{split}
\end{align*}
Thus, $\lim_{T \to 0}\E\left(\int_{0}^{T}( BS^{-1}) ^{\prime \prime}( k^*, \Phi_r) U_{r}^{2}dr\right)=0$. Finally, by continuity, we conclude that
\begin{equation} \label{io}
\lim_{T \to 0} I^0( 0,k^*) = \lim_{T \to 0} \E(v_0) = \sigma_0,
\end{equation}
which completes the proof of (\ref{main1}) in the uncorrelated case.

\subsubsection{The correlated case}

Using similar ideas as in the uncorrelated case we get that
\begin{equation*}
\begin{split} 
I( 0,k^*) &= BS^{-1}( k^*,V_{0})  =  \E\left( BS^{-1}(k^*,\Gamma_T) + BS^{-1}(k^*,\Gamma_0) - BS^{-1}(k^*,\Gamma_0)\right)   \\
&=\E \left(BS^{-1}(k^*,\Gamma_T ) -BS^{-1}(k^*,\Gamma_0 )\right) +I^0( 0,k^*)  \\
&=\E \left( BS^{-1}(k^*,\Gamma_T) -BS^{-1}(k^*,\Gamma_0 ) \right) + I^0( 0,k^*),
\end{split}
\end{equation*}
where $\Gamma _{s}:=\E\left(BS(0,X_0,k^*,v_{0})\right) +\frac{\rho }{2}\E\left(\int_{0}^{s} H(r,X_{r},k^*,v_{r})\Lambda _{r}dr\right)$.

Then, a direct application of It\^o's formula gives us
\begin{equation*}
I( 0,k^*) = I^0( 0,k^*)  + \E \left(\int_{0}^{T}( BS^{-1}) ^{\prime }( k^{\ast},\Gamma _{s}) H(s,X_s,k^*,v_{s})\Lambda _{s}ds\right) .
\end{equation*}
By (\ref{o1}), we have that $(\E(\left|H(s,X_s,k^*,v_{s})\right|^p))^{1/p} \leq C(T-s)^{-1}$. Moreover, using the Lemma \ref{bound3} in  Appendix \ref{Greeks} we have that for $T$ sufficiently small $\left|(BS^{-1}) ^{\prime }( k^{\ast},\Gamma _{s})\right| \leq C(T-s)^{-\frac{1}{2}}$. Therefore, using Hypotheses 1 and \ref{Hyp4} and Cauchy-Schwarz inequality, we get that 
\begin{equation*} 
\begin{split}
&\bigg\vert\E \left(\int_{0}^{T}( BS^{-1}) ^{\prime }( k^{\ast},\Gamma _{s}) H(s,X_s,k^*,v_{s})\Lambda _{s}ds\right) \bigg\vert \\
&\quad \leq C \int_0^T (T-s)^{-1/2} (T-s)^{-1} (T-s)^{H+1/2} ds=C T^H.
\end{split}
\end{equation*}
Thus, $\lim_{T \to 0}\E \left(\int_{0}^{T}( BS^{-1}) ^{\prime }( k^{\ast},\Gamma _{s}) H(s,X_s,k^*,v_{s})\Lambda _{s}ds\right)=0$. Finally, using (\ref{io}) we conclude the proof of (\ref{main1}) in the correlated case.

\subsection{Proof of (\ref{main2}) in Theorem \ref{limskew}: ATM implied volatility
skew}

Appealing to Propositions \ref{skew} and \ref{skew2}
we have that 
\begin{align}
\begin{split}
&\lim_{T\to 0}T^{\max(\frac12-H, 0)}\partial_kI(0,k^*) \\
&= \lim_{T\to 0}T^{\max(\frac12-H, 0)}\frac{1}{\partial_{\sigma}BS(0,X_0,k^*,I(0,k^{*}))}\bigg(\mathbb{E}\left( G(0,X_0,k^*,v_0) J_0\right) \\
&\qquad \qquad +\mathbb{E}\left(\frac{1}{2}\int_0^T (\partial_{xxx}^3-\partial_{xx}^2)G(s,X_s,k^*,v_s) J_s\Lambda_sds\right)\\
&\qquad \qquad + \mathbb{E}\left(\int_0^T \partial_{x}G(s,X_s,k^*,v_s)\sigma_s D^{-}J_sds\right)\bigg).
\end{split}
\label{interm32}
\end{align}

We start by analysing the second term in (\ref{interm32}).
 Using Hypotheses \ref{Hyp1} and \ref{Hyp4}, (\ref{o3}) and Cauchy-Schwarz inequality, we get that
 \begin{equation*}
\begin{split}
&\bigg\vert\mathbb{E}\left(\int_0^T (\partial_{xxx}^3-\partial_{xx}^2)G(s,X_s,k^*,v_s) J_s\Lambda_s ds\right)  \bigg\vert\\
&\leq C \int_0^T (T-s)^{-3} \E\left| J_s\Lambda_s\right|ds \\
&\leq C \int_0^T (T-s)^{-3}\sqrt{\mathbb{E}\left(\left(\int_s^T \left|D_s^{W'}\sigma_r\right|dr\right)^2\right)\mathbb{E}\left(\left(\int_s^T\int_u^T\left|D_u^{W'}\sigma_r\right| dr du\right)^2 \right)} ds \\
&\leq C \int_0^T (T-s)^{-3} (T-s)^{1/2} (T-s)^H (T-s)^{1/2} (T-s)^{H+1} ds =CT^{2H}.
\end{split}
\end{equation*}

Next, we treat the third term in (\ref{interm32}).
We have that
\begin{align*}
\begin{split}
\int_s^TD_s^W\Lambda_rdr &= \int_s^TD_s^W\left(\sigma_r\int_r^TD_r^W\sigma_u^2du\right)dr\\
&= \int_s^T\left(\left(D_s^W\sigma_r\right)\int_r^TD_r^W\sigma_u^2du+\sigma_r\int_r^TD_s^WD_r^W\sigma_u^2du\right)dr,
\end{split}
\end{align*}
where
\begin{align*}
\begin{split}
D_s^WD_r^W\sigma_u^2&=2(D_s^W\sigma_uD_r^W\sigma_u+\sigma_uD_s^WD_r^W\sigma_u).
\end{split}
\end{align*}

Hypothesis \ref{Hyp1} implies that
$$\left|D_s^WD_r^W\sigma_u^2\right| \leq C \left| D_s^{W'}\sigma_u D_r^{W'}\sigma_u + D_s^{W'}D_r^{W'}\sigma_u\right|.$$

Next, Hypotheses 1 and \ref{Hyp4} together with Cauchy-Schwarz inequality yield to 
\begin{equation*}
\begin{split}
\mathbb{E}\left(\sigma_r\int_r^T\left|D_s^WD_r^W\sigma_u^2\right|du\right) & \leq C \int_r^T \left( (u-r)^{H-\frac{1}{2}}(u-s)^{H-\frac{1}{2}}\right)du \\
&\leq C (T-s)^{2H+1}, \\
\mathbb{E}\left(\left|D_s^W\sigma_r\int_r^TD_r^W\sigma_u^2du\right|\right) 
& \leq C \sqrt{\mathbb{E}((D_s^{W'}\sigma_r)^2) \mathbb{E}\left(\left(\int_r^T\left|D_r^{W'}\sigma_u\right|du\right)^2 \right)} \\
&\leq C (r-s)^{H-\frac{1}{2}}(T-r)^{H+\frac{1}{2}}.
\end{split}
\end{equation*}

Then, using the computations above together with (\ref{o2}), we get that
\begin{equation*}
\begin{split}
&\bigg\vert \mathbb{E}\left(\int_0^T \partial_{x}G(s,X_s,k^*,v_s)\sigma_s D^{-}J_s ds\right)  \bigg\vert \\
& \leq \int_0^T (T-s)^{-2} \E \vert D^{-}J_s \vert ds \\
& \leq C \int_0^T (T-s)^{-2} \left( (T-s)^{2H+1} + (T-s)^{2H+2}) \right) ds\\
&\leq C T^{2H}.
\end{split}
\end{equation*}

Finally, using the expression (\ref{derb}), we conclude that the limits of the two terms on the right-hand side of (\ref{interm32}) are zero. Therefore, appealing again to (\ref{derb}), we conclude that
\begin{equation*}
\begin{split}
&\lim_{T \to 0} T^{\max(\frac12-H, 0)}\partial_kI(0,k^*)\\
& = \lim_{T \to 0} T^{\max(\frac12-H, 0)}\frac{\mathbb{E}\left( G(0,X_0,k^*,v_0) J_0\right)}{-I(0,k^{\ast}) T e^{I(0,k^{\ast})^2 T} \text{Erfc}\left(\frac{3 I(0,k^{\ast})  \sqrt{T}}{2 \sqrt{2}}\right)+\frac{ e^{-\frac{1}{8} I(0,k^{\ast})^2
   T}\sqrt{T}}{\sqrt{2 \pi }}}.
\end{split}
\end{equation*}
where
\begin{equation*}
G(0,X_0,k^*,v_0)= \frac{e^{-\frac{1}{8} T v_0^2} \left(T v_0^2+4\right)}{8 \sqrt{2 \pi } T^{3/2} v_0^3}
\end{equation*}
and
$$
J_0=\int_0^T \sigma_s \int_s^T D^W_s \sigma_r^2 dr ds.
$$

Notice that 
\begin{align*}
\mathbb{E} J_0 =  \mathbb{E}\int_0^T 2\rho\left( (\sigma_0+(\sigma_s-\sigma_0))\int_s^T(\sigma_0+(\sigma_r-\sigma_0))D_s^{W'}\sigma_rdr \right)ds.
\end{align*}
Hence, using Hypotheses 1-4 and Cauchy-Schwarz inequality, we get that
\begin{equation*}
\begin{split}
\left|\mathbb{E} J_0-2 \rho \sigma_0^2\int_0^T \int_s^T \mathbb{E}\left(D_s^{W'}\sigma_r\right)dr ds\right| \leq C T^{\frac{3}{2}+H+\gamma}.
\end{split}
\end{equation*}
Therefore, appealing to the Lebesgue dominated convergence theorem, we conclude the proof of (\ref{main2}).

\section{Numerical analysis}

In this section, we justify Theorem \ref{limskew} with numerical simulations. Note that the SABR and fractional Bergomi models do not satisfy Hypothesis \ref{Hyp1}. However, a truncation argument justifies the application of Theorem \ref{limskew}, similar to the approach in Alòs and Shiraya \cite{Alos2019a} and Alòs{\it{et al.}} \cite{Alos2022}. See Appendix C for the details.

\subsection{The SABR model}

We consider the SABR stochastic volatility model (see Hagan{\it{et al.}} \cite{H02}) with a skewness parameter equal to 1, which is the most common case
from a practical point of view. This corresponds to equation (\ref{B_Epm}), where $S_t$ denotes the  forward price of the underlying asset and
$$
    d\sigma_t  = \alpha\sigma_t dW_t', \qquad \sigma_t=\sigma_0e^{\alpha W'_t-\frac{\alpha^2}{2}t}.
  $$ 
  where $\alpha>0$ is the volatility of volatility.
 
For \(r\leq t\), we have
$
D_r^{W'}\sigma_t = \alpha \sigma_t
$
and $\mathbb E \left(D_r^{W'}\sigma_t\right) = \alpha \sigma_0$. Therefore, applying Theorem \ref{limskew} we conclude that (see Appendix C)
\begin{align}
\label{skewSABR}
\begin{split}
\lim_{T \to 0} \partial_k I(0,k^{*}) &= \frac{1}{2}\rho \alpha.
 \end{split}
 \end{align}

Observe that from equation \eqref{skewSABR}, we can infer how the parameters of the SABR model influence the value of the ATM implied volatility skew. Firstly, the skew has a finite value, and its behavior depends on \(\rho\), the correlation between the underlying asset and its volatility, and \(\alpha\), the volatility of volatility. Therefore, the sign of the skew is fully determined by the sign of \(\rho\). Furthermore, the magnitude of the skew is directly proportional to the parameters of the SABR model.

We next proceed with some numerical simulations using the following parameters
$$S_0 = 100, \, T=0.001, \, dt=\frac{T}{50}, \, \alpha = 0.3, \,  \, \sigma_0 =(0.1, 0.2, \dots, 1.4).$$
Observe that the maturity chosen for the simulations, \( T = 0.001 \), is equivalent to half a day, which is realistic, as the shortest maturity options on crypto exchanges are typically daily options.

To estimate the price of the Inverse European call option, we use antithetic variates. The price estimate is calculated as follows:
\begin{align}
\begin{split}
\hat{V}_{sabr} &= \frac{ \frac{1}{N}\sum_{i=1}^N V_T^i + \frac{1}{N}\sum_{i=1}^N V_T^{i,A} }{2},
\end{split}
\label{antithetic}
\end{align}
where \( N = 2,000,000 \) and the subscript \( A \) denotes the value of the call option computed on the antithetic trajectory of a Monte Carlo path.

To recover the implied volatility, we use Brent’s method, which combines the bisection, secant, and inverse quadratic interpolation methods, ensuring guaranteed convergence to a root. This method is efficient and does not require derivatives, making it robust for functions that are difficult to differentiate. Striking a balance between reliability and speed is particularly useful for finding roots of continuous functions in challenging situations.

For estimating the skew, we use the following expression, which allows us to avoid the finite difference approximation of the first-order derivative:
\begin{align}
\begin{split}
\partial_k \hat{I}(0,k^{*}) = \frac{-\partial_kBS(0,X_0,k^*,I(0,k^*))-\mathbb{E}\left(e^{k^*-X_T}1_{X_T\geq k^*}\right)}{\partial_{\sigma}BS(0,X_0,k^*,I(0,k^*))}.
\end{split}
\label{skew_estimator}
\end{align}

In Figures \ref{fig2} and \ref{fig22}, we present the results of a Monte Carlo simulation aimed at numerically estimating the skew and the level of the at-the-money implied volatility of the Inverse European call option under the SABR model. We conclude that the numerical results match the theoretical ones.
\begin{figure}[h]
\centering
\begin{tabular}{ccc}
  \includegraphics[width=60mm]{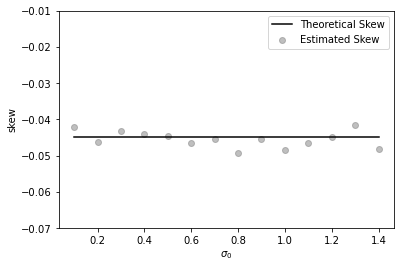} &   \includegraphics[width=60mm]{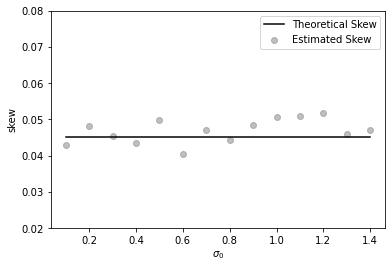} \\
(a) $\rho$=-0.3 & (b) $\rho$=0.3 \\[2pt]
\end{tabular}
\caption{At-the-money skew of the IV under the SABR model.}
\label{fig2}
\end{figure}

\begin{figure}[h]
\centering
\includegraphics[width=60mm]{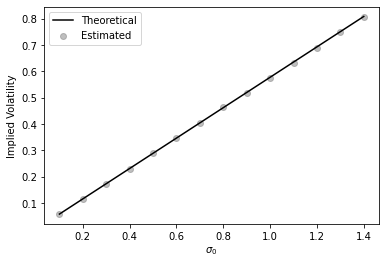}   
\caption{At-the-money level of the IV under the SABR model.}
\label{fig22}
\end{figure}

\subsection{The fractional Bergomi model}

The fractional Bergomi stochastic volatility model (see Alòs {\it{et al.}} \cite[Chapter 5]{Alos2021a})  asssumes equation (\ref{B_Epm}) with
\begin{equation*} \begin{split}
    \sigma_t^2 &= \sigma_0^2 e^{v \sqrt{2H}Z_t-\frac{1}{2}v^2t^{2H}}, \\
    Z_t& = \int_0^t(t-s)^{H-\frac{1}{2}}dW'_s,
    \end{split}
    \end{equation*}
    where $H \in  (0,1)$ and $v>0$.
For $r \leq u$, we have 
\begin{equation*} 
\begin{split}
D_r^{W'}\sigma_u &=  \frac12 \sigma_u v \sqrt{2 H} (u-r)^{H-\frac{1}{2}}, \\
\E(D_r^{W'}\sigma_u) &=e^{-\frac18 v^2 u^{2H}}\frac12  \sigma_0
v \sqrt{2 H} (u-r)^{H-\frac{1}{2}}.
\end{split}
\end{equation*}
Then, applying Theorem \ref{limskew} we obtain that (see Appendix C for the details)
\begin{equation} 
\label{skewRB}
\begin{split}
\lim_{T \to 0} \partial_kI(0,k^{*}) 
 &=\begin{cases}
0 \quad &\text{if} \quad H>\frac12 \\
\frac{\rho v }{4}\quad &\text{if} \quad H=\frac12.
\end{cases}
\end{split}
\end{equation}
and  for $H<\frac12$
\begin{equation} 
\label{skewRB2}
\begin{split}
\lim_{T \to 0} T^{\frac{1}{2}-H}\partial_k I(0,k^{*})&=\frac{2\rho v \sqrt{2 H} }{ \left(3+ 4H(2+H)\right)}.
\end{split}
\end{equation}

In contrast to the SABR model, the fractional Bergomi model introduces the parameter \( H \) into the analytical expression for the skew. Note that equation \eqref{skewRB} relies on condition \( H \geq \frac{1}{2} \). In this case, the ATM implied volatility skew is either zero or depends solely on the product of \( \rho \) and \( v \). Therefore, the sign of the skew is fully determined by the sign of \( \rho \).

We observe that in the case of rough volatility, i.e., \( H < \frac{1}{2} \), the skew behaves as \( T^{H - 1/2} \), which causes it to diverge. However, as shown in (\ref{skewRB2}), a properly scaled skew takes a finite value, and its sign remains determined by the sign of \( \rho \), with the skew being an increasing function of vol-of-vol. The dependence on \( H \) is fully non-linear.

The blow-up of the ATM skew limit in the case of rough volatility is the same phenomenon observed in vanilla options, as discussed in \cite{Alos2007b}. The intuitive explanation is that ATM options are highly uncertain, with a 50/50 chance of becoming in- or out-of-the-money. As the option's maturity shrinks, this uncertainty increases because there is less time for the price to rebound if it moves in favour of the option. This increases the risk of significant losses if unexpected market changes occur. To hedge against this risk, market makers widen spreads in terms of volatility points, which explains the observed IV skew in real market data.
Mathematically, the rough volatility model captures the idea that volatility itself has volatility. Driven by fractional Brownian motion, this model accounts for abrupt shifts in volatility, which are reflected in the adjustment of the IV skew in the Black-Scholes model.

The parameters used for the Monte Carlo simulation are as follows:
$$
S_0 = 100, \, T=0.001, \, dt=\frac{T}{50}, \, H=(0.4, 0.7), \, v = 0.5, \, \rho = -0.3,$$
and $\sigma_0 =(0.1, 0.2, \dots, 1.4)$.

To estimate the price of the Inverse European call option under the fractional Bergomi model, we use antithetic variates as presented in equations \eqref{antithetic}. For the estimation of the skew, we use equation \eqref{skew_estimator}.

In Figure \ref{fig6}, we present the ATM implied volatility skew as a function of maturity for the Inverse European call option, considering two different values of \( H \). We observe the blow-up to \( -\infty \) for the case \( H = 0.4 \). To demonstrate that the fractional Bergomi model captures the power-law structure of the ATM implied volatility skew, we fit the estimated skew to a power-law function \( c T^{-\alpha} \), where \( \alpha \in (0, 0.5) \). Recall that this power-law structure depends on \( H \). According to Theorem 1, when \( H = 0.4 \), the skew should diverge as \( T^{-0.1} \), and when \( H = 0.7 \), it should approach zero as \( T^{0.3} \). However, in Figure \ref{fig6}, we obtain fits of \( T^{-0.059} \) and \( T^{0.237} \), respectively. This difference is attributed to the numerical instability of the finite-difference estimation at short maturities in the presence of rough noise. This issue could be improved by significantly increasing the number of Monte Carlo samples or applying a variance reduction technique.
\begin{figure}[h]
\centering
\begin{tabular}{ccc}
  \includegraphics[width=60mm]{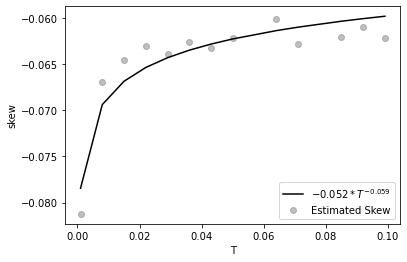} &   \includegraphics[width=60mm]{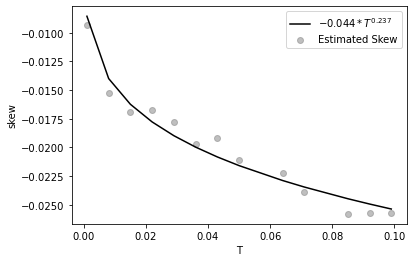} \\
(a) H=0.4, $\sigma_0=0.3$ & (b) H=0.7, $\sigma_0=0.3$ \\[2pt]
\end{tabular}
\caption{At-the-money IV skew as a function of $T$ under fractional Bergomi model}
\label{fig6}
\end{figure}

Due to the blow-up of the at-the-money implied volatility skew of the Inverse European call option when \( H < \frac{1}{2} \), we also plot the quantities \( T^{\frac{1}{2} - H} \partial_k \hat{I}(0, k^{*}) \) for \( H = 0.4 \) and \( \partial_k \hat{I}(0, k^{*}) \) for \( H = 0.7 \) in Figure \ref{fig7}. In Figure \ref{fig77}, we present the estimates of the ATM IV level. We conclude that the theoretical results are in line with the values provided by Theorem \ref{limskew}.
\begin{figure}[h]
\centering
\begin{tabular}{ccc}
  \includegraphics[width=60mm]{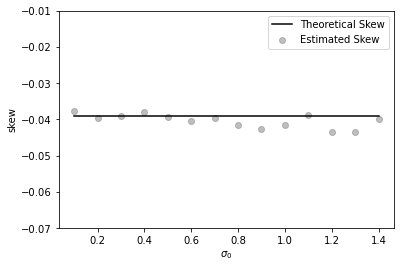} &   \includegraphics[width=60mm]{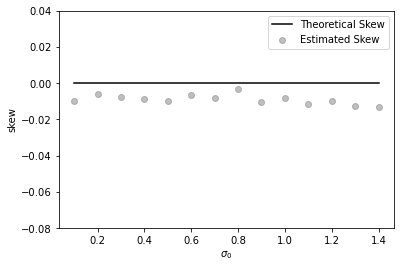} \\
(a) H=0.4 & (b) H=0.7 \\[2pt]
\end{tabular}
\caption{ATMIV skew as a function of $\sigma_0$ under fractional Bergomi model}
\label{fig7}
\end{figure}

\begin{figure}[h]
\centering
\begin{tabular}{ccc}
  \includegraphics[width=60mm]{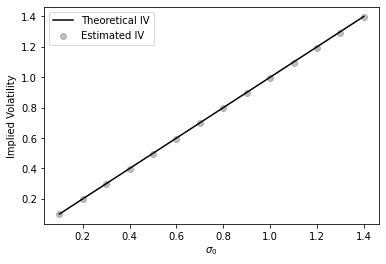} &   \includegraphics[width=60mm]{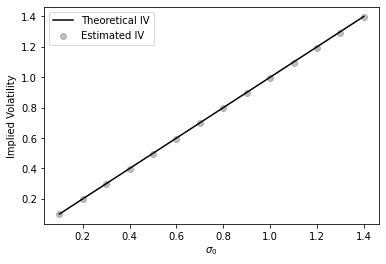} \\
(a) H=0.4 & (b) H=0.7 \\[2pt]
\end{tabular}
\caption{ATMIV level as a function of $\sigma_0$ under fractional Bergomi model.}
\label{fig77}
\end{figure}

\subsection{Empirical Application}

In this section, we demonstrate how the results of Theorem \ref{limskew} can be applied to empirical data. We consider Bitcoin options traded on Deribit\footnote{https://www.deribit.com/options/BTC} on 7 May 2024. The spot price is \( S_0 = 63,500 \) US dollars, and we use 10 different maturities expressed in years, given by the following set of values:
\[
[0.0027, 0.0082, 0.0301, 0.0493, 0.0685, 0.1452, 0.2219, 0.3945, 0.6438, 0.8932].
\]

Since our theoretical results are stated as an asymptotic limit, we plot in Figure \ref{fig99} the implied volatility as a function of the strike for the shortest available maturity, which is 1 day. Recall that the first part of Theorem \ref{limskew} states \( \lim_{T \to 0} I(0, k^*) = \sigma_0 \). Hence, we conclude that the market estimate of \( \sigma_0 \) is approximately 0.36.
\begin{figure}[h]
\centering
  \includegraphics[width=75mm]{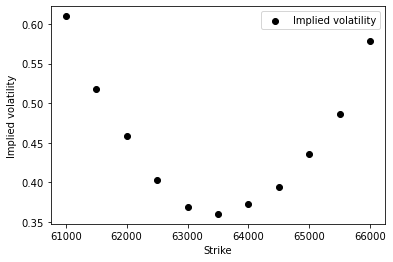} 
\caption{Implied volatility for daily option.}
\label{fig99}
\end{figure}

The second part of Theorem \ref{limskew} says that
\begin{align*}
\begin{split}
\lim_{T \to 0} T^{\max(\frac12-H, 0)}\partial_kI(0,k^{*})&=\lim_{T \to 0} T^{\max(\frac12-H, 0)}\frac{\rho}{ \sigma_0 T^2}\int_0^T\left(\int_r^T\E(D_r^{W'}\sigma_u) du \right) dr.
 \end{split}
\end{align*}

To estimate the implied roughness of the volatility process and identify a class of models that can capture the observed market structure, we begin by estimating the at-the-money implied volatility skew using a standard market approach. 

Note that the crypto derivatives market is sufficiently liquid, allowing us to obtain current market quotes for a wide region of the implied volatility surface. In particular, the most liquid part of the market corresponds to \(0.2-0.5\) delta options, which is exactly the region on which we focus. For a detailed discussion on this topic, see
Mixon \cite{Mixon}.

More specifically, for each available maturity, we set
$$ \partial_k\hat{I}(0,k^{*})= \frac{\Delta_{put}^{0.25}-\Delta_{call}^{0.25}}{\Delta_{call}^{0.5}},$$
Where \( \Delta_{put}^{0.25} \) and \( \Delta_{call}^{0.25} \) represent the market implied volatility of a put and call option with a delta of 0.25, respectively.

Next, we check whether the power-law structure holds for the at-the-money implied volatility skew as a function of maturity. We fit the function \( c \times T^{\alpha} \), where \( \alpha = H - \frac{1}{2} \), to the estimated ATMIV skew using a standard ordinary least squares (OLS) estimator. The result is presented in Figure \ref{fig8}. We observe that the power-law fits the market at-the-money skew adequately, with \( \alpha = 0.3 \), which implies that the market estimation of the Hurst parameter \( H \) from our data is approximately 0.8.

It is important to note that we are not estimating \( H \) from the underlying asset's trajectories but rather from the skew values obtained from the market (see, for instance, Itkin \cite{I24}). Furthermore, the market ATMIV skew for the shortest maturity is 0.014, which is consistent with our theoretical formula (\ref{skewRB}).

Recall that a zero skew means that 0.25 delta calls and puts are priced at the same level of implied volatility, indicating that there is no extra risk premium associated with put options. In contrast to regular markets, where put options are considered insurance against downside movement and are thus in greater demand, leading to elevated implied volatility in comparison to equivalent call options.
\begin{figure}[h]
\centering
  \includegraphics[width=75mm]{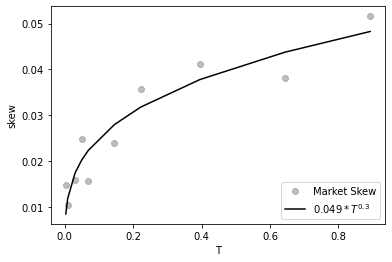} 
\caption{At-the-money implied volatility skew power-law.}
\label{fig8}
\end{figure}

We also observe that the SABR model is not suitable for modeling our observed market implied volatility. There are two main reasons for this. First, it is well-known that the SABR implied volatility surface, along with its approximation through Hagan's formula, cannot accurately reproduce the power-law term structure of the at-the-money skew. Second, formula (\ref{skewSABR}) implies that in order to fit a positive short-end skew, we must force \( \rho \) to be positive. However, the assumption that the underlying spot price is positively correlated with volatility contradicts observed data in regular markets and warrants further investigation.

In summary, through Theorem \ref{limskew}, we are able to identify a class of models that can accurately model the market implied volatility surface and draw meaningful conclusions regarding the market's implied estimates of the corresponding model parameters.

Although the scope of this analysis may seem limited due to the non-monotonicity of implied volatility, this is a common issue in derivatives pricing. For example, calibrating parametric models to the market-implied volatility surface is ill-posed, as the relationship between model parameters and implied volatility is non-unique (e.g., in the SABR, Heston, and fractional Bergomi models). In practice, this non-uniqueness is addressed by selecting the most relevant volatility level for the current market state. Moreover, short-maturity implied volatility often plays a key role in the surface and can serve as a good starting point for approximating the entire curve. While non-uniqueness complicates theoretical analysis, defined values at short maturities often provide reliable approximations for longer maturities.

In terms of trading, short-dated options are actively traded, particularly on platforms like Deribit, which has become a leader in cryptocurrency options. These options, typically expiring daily or weekly, are popular for capitalizing on short-term volatility. Deribit's trading volume data shows a 99$\%$ year-over-year increase in 2024, reflecting the growing demand for options. Traders use short-term options to hedge or exploit event-driven market fluctuations. As a result, there is an increasing need for analytical models for crypto options. The formulas presented here are applicable to options with maturities up to six months, as demonstrated, for example, in Alòs {\it{et al.}}\cite{Alos2022}.

\hypertarget{disclosure-statement}{%
\section*{Disclosure statement}\label{disclosure-statement}}
\addcontentsline{toc}{section}{Disclosure statement}

No potential conflict of interest was reported by the authors.

\appendix

\section{A primer on Malliavin Calculus} \label{MCintro}

We introduce the elementary notions of the
Malliavin calculus used in this paper (see Nualart and Nualart \cite{Nualart2018b}). Consider a standard Brownian motion $Z=\{Z_t\}_{t \in [0,T]}$ defined on a complete probability space $(\Omega, \mathcal{F}, \mathbb{P})$ and the corresponding filtration $\{ \mathcal{F}_t^Z \}_{t \in [0, T]}$ generated by $\{Z_t \}_{t \in [0, T]}$. Let ${\cal S}^Z$ be the set of random variables of the form
\begin{equation*}
F=f(Z(h_{1}),\ldots ,Z(h_{n})),  
\end{equation*}
with $h_{1},\ldots ,h_{n}\in L^2([0,T])$, $Z(h_i)$ denotes the Wiener integral of the function $h_i$, for $i=1,..,n$, and $f\in C_{b}^{\infty }(\mathbb{R}^n) $ 
(i.e., $f$ and all its partial derivatives are bounded). Then the Malliavin 
derivative of $F$, $D^Z F$,  is defined
as the stochastic process given by 
\begin{equation*}
D_{s}^ZF=\sum_{j=1}^{n}{\frac{\partial f}{\partial x_{j}}}(Z(h_{1}),\ldots ,Z(h_{n})) h_j(s), \quad s\in [0,T].
\end{equation*}
This operator is closable from $L^{p}(\Omega )$ to $L^p(\Omega; L^2([0,T]))$, for all $p \geq 1$, and we denote by ${\mathbb{D}}_{Z}^{1,p}$ the
closure of ${\cal S}^Z$ with respect to the norm
$$
||F||_{1,p}=\left( \E\left| F\right|
^{p}+\E||D^Z F||_{L^{2}([0,T])}^{p}\right) ^{1/p}.
$$
We also consider the iterated
derivatives \(D^{Z,n}\) for all integers \(n > 1\) whose domains will be denoted by
\(\mathbb{D}^{n,p}_Z\), for all $p \geq 1$. We will use the notation $\mathbb{L}_Z^{n,p}:=L^p([0,T];{\mathbb{D}}_{Z}^{n,p})$.

We denote by $\delta_Z$ the adjoint of the derivative operator and by Dom $\delta_Z$ it domain. If $u$ belongs to  Dom $\delta_Z$, then $\delta (u)$ is called the Skorohod integral of $u$, since the set of $\{ \mathcal{F}_t^Z \}_{t \in [0, T]}$-adapted processes in $L^2([0,T] \times \Omega)$ is included in Dom $\delta_Z$, and for such processes the Skorohod integral coincides with the It\^o's integral. We shall use the notation $\delta_Z(u)=\int_0^T u_s dZ_s$, for any $u \in $ Dom $\delta_Z$. We recall that $\mathbb{L}_Z^{n,2}$ is included in the domain of $\delta_Z$ for all $n \geq 1$ and that $\E(\delta(u))=0$ for all $u \in $ Dom $\delta_Z$.

One of the main results in Malliavin calculus is the Clark-Ocone-Haussman formula.
\begin{theorem} \label{clar}
Let $F\in \mathbb{D}_{Z}^{1,2}$. Then
$$
F=\E(F)+\int_0^T \E(D^Z_r(F)| \mathcal{F}_r^Z )dZ_r.
$$

\end{theorem}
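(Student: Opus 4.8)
The plan is to combine the martingale representation theorem with the duality between the Malliavin derivative $D^Z$ and the Skorohod integral $\delta_Z$. Since $F\in\mathbb{D}_Z^{1,2}\subset L^2(\Omega)$ is $\mathcal{F}_T^Z$-measurable, the martingale representation theorem furnishes a unique $\mathcal{F}^Z$-adapted process $u\in L^2([0,T]\times\Omega)$ with
$$F=\E(F)+\int_0^T u_r\,dZ_r.$$
The entire content of the theorem is then the identification $u_r=\E(D_r^Z F\mid\mathcal{F}_r^Z)$, which I would establish by testing both sides against an arbitrary adapted process.

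First I would fix an arbitrary $\mathcal{F}^Z$-adapted $h\in L^2([0,T]\times\Omega)$ and compute $\E(F\,\delta_Z(h))$ in two ways. Since $h$ is adapted, $\delta_Z(h)=\int_0^T h_r\,dZ_r$ is an It\^o integral with zero mean; subtracting the constant $\E(F)$ and applying the It\^o isometry gives $\E\bigl(F\,\delta_Z(h)\bigr)=\E\bigl(\int_0^T u_r h_r\,dr\bigr)$. On the other hand, the standard duality (integration-by-parts) formula $\E(F\,\delta_Z(h))=\E\bigl(\int_0^T D_r^Z F\cdot h_r\,dr\bigr)$, valid since $F\in\mathbb{D}_Z^{1,2}$ and the adapted $h$ lies in $\mathrm{Dom}\,\delta_Z$, rewrites the same quantity through the Malliavin derivative.

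Next I would use the adaptedness of $h$ to condition: by the tower property and Fubini, and because $h_r$ is $\mathcal{F}_r^Z$-measurable, we have $\E\bigl(\int_0^T D_r^Z F\cdot h_r\,dr\bigr)=\E\bigl(\int_0^T \E(D_r^Z F\mid\mathcal{F}_r^Z)\,h_r\,dr\bigr)$. Equating the two expressions for $\E(F\,\delta_Z(h))$ yields
$$\E\left(\int_0^T\bigl(u_r-\E(D_r^Z F\mid\mathcal{F}_r^Z)\bigr)h_r\,dr\right)=0$$
for every adapted $h$. Both $u$ and $r\mapsto\E(D_r^Z F\mid\mathcal{F}_r^Z)$ are adapted and square-integrable, so choosing $h_r=u_r-\E(D_r^Z F\mid\mathcal{F}_r^Z)$ forces their difference to vanish in $L^2([0,T]\times\Omega)$, which is exactly the claimed formula.

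The main obstacle is the rigorous justification of the duality formula together with the measurability and integrability needed for the conditioning step: one must know that every adapted $h\in L^2([0,T]\times\Omega)$ belongs to $\mathrm{Dom}\,\delta_Z$ with $\delta_Z(h)$ equal to the It\^o integral, and that $r\mapsto\E(D_r^Z F\mid\mathcal{F}_r^Z)$ is a genuine element of $L^2([0,T]\times\Omega)$ so that the final $L^2$-identification is legitimate. If one prefers not to invoke the duality formula as a black box, the cleanest route is to first verify the identity for smooth cylindrical $F=f(Z(h_1),\dots,Z(h_n))$ by direct computation, and then pass to general $F\in\mathbb{D}_Z^{1,2}$ using the closability of $D^Z$ and the $L^2$-continuity of conditional expectation and of the It\^o integral.
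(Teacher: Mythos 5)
Your argument is correct. Note that the paper does not prove this statement at all: Theorem \ref{clar} is quoted in Appendix \ref{MCintro} as a standard result of Malliavin calculus, with a pointer to Nualart and Nualart \cite{Nualart2018b}, so there is no in-paper proof to compare against. What you give is the classical proof of the Clark--Ocone--Haussman formula: martingale representation to obtain $F=\E(F)+\int_0^T u_r\,dZ_r$, then identification of $u_r$ with $\E(D_r^Z F\mid\mathcal{F}_r^Z)$ by pairing against adapted square-integrable $h$ via the It\^o isometry on one side and the duality $\E(F\,\delta_Z(h))=\E\bigl(\int_0^T D_r^Z F\, h_r\,dr\bigr)$ on the other. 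All the ingredients you invoke are consistent with the paper's own setup: the appendix explicitly records that adapted processes in $L^2([0,T]\times\Omega)$ lie in $\mathrm{Dom}\,\delta_Z$ and that $\delta_Z$ coincides with the It\^o integral there, which is exactly what your test-function step needs. The two technical points you flag --- that one must work with a progressively measurable version of $r\mapsto\E(D_r^Z F\mid\mathcal{F}_r^Z)$, and that the general case follows from cylindrical $F$ by closability of $D^Z$ --- are precisely the standard details, and your fallback route through smooth functionals is the textbook treatment. One implicit hypothesis worth making explicit is that $F$ is $\mathcal{F}_T^Z$-measurable (so that martingale representation applies); this is automatic here since $\mathbb{D}_Z^{1,2}$ is defined as the closure of cylindrical functionals of $Z$.
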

The following theorem is an extension of classical Ito's lemma for the case of non-anticipating processes, see Al\`os \cite{Alos2006a} for details.
\begin{theorem}[Anticipating It\^o's Formula]
\label{aito}
Consider a process of the form $$X_t=X_0+\int_0^t u_sdZ_s+\int_0^t v_s ds,$$ where $X_0$ is a $\mathcal{F}_0^Z$-measurable random variable and $u$ and $v$ are $\{ \mathcal{F}_t^Z \}_{t \in [0, T]}$-adapted processes in $L^2([0,T] \times \Omega)$.

Consider also a process $Y_t = \int_t^T \theta_sds$, for some $\theta \in \mathbb{L}_Z^{1,2}$. Let $F : [0,T] \times \mathbb{R}^2 \rightarrow \mathbb{R}$ be a function $C^{1,2}([0,T] \times \mathbb{R}^2)$ such that there exists a positive constant $C$ such that, for all $t \in [0,T]$, $F$ and its derivatives evaluated in $(t,X_t,Y_t)$ are bounded by $C$. Then it follows that for all $t \in [0,T]$,
\begin{align*}
F(t,X_t,Y_t) &= F(0,X_0,Y_0)+\int_0^t \partial_sF(s,X_s,Y_s)ds+\int_0^t \partial_xF(s,X_s,Y_s) u_s dZ_s \\
&\qquad +\int_0^t \partial_xF(s,X_s,Y_s) v_s ds+\int_0^t \partial_yF(s,X_s,Y_s)dY_s\\
&\qquad +\int_0^t \partial_{xy}^2F(s,X_s,Y_s)u_sD^{-}Y_s ds+\frac{1}{2}\int_0^t \partial_{xx}^2F(s,X_s,Y_s)u_s^2 ds,
\end{align*}
where $D^{-}Y_s=\int_s^TD^Z_s\theta_rdr$ and the integral $\int_0^t \partial_xF(s,X_s,Y_s) u_s dZ_s$ is a Skorohod integral since the process $\partial_xF(s,X_s,Y_s) u_s$ is not adapted.
\end{theorem}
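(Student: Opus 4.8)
The plan is to follow the argument of Al\`os \cite{Alos2006a}, proving the identity by time discretization combined with the calculus of Skorohod integrals. First I would reduce, by a standard localization and truncation argument (the one invoked repeatedly in the proofs above), to the case in which $F$ has globally bounded partial derivatives, so that all Taylor remainders are uniformly controlled; the boundedness hypotheses along $(s,X_s,Y_s)$ make this reduction and the final passage to the limit clean. Fix a partition $\pi:0=t_0<t_1<\cdots<t_n=t$ and write the telescoping sum
\begin{equation*}
F(t,X_t,Y_t)-F(0,X_0,Y_0)=\sum_{i=0}^{n-1}\left[F(t_{i+1},X_{t_{i+1}},Y_{t_{i+1}})-F(t_i,X_{t_i},Y_{t_i})\right].
\end{equation*}

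Next I would expand each increment by Taylor's formula, to first order in time and to second order in $(x,y)$. Writing $\Delta Z_i=Z_{t_{i+1}}-Z_{t_i}$ and $\Delta t_i=t_{i+1}-t_i$, one has $\Delta X_i=u_{t_i}\Delta Z_i+v_{t_i}\Delta t_i$ up to terms that are negligible in $L^2$ as $|\pi|\to 0$, while $\Delta Y_i=-\int_{t_i}^{t_{i+1}}\theta_s\,ds$ is of bounded-variation size $O(\Delta t_i)$. The time-derivative terms converge to $\int_0^t\partial_s F\,ds$; the drift contribution $\partial_x F\,v_{t_i}\Delta t_i$ converges to $\int_0^t\partial_x F\,v_s\,ds$; the term $\partial_y F\,\Delta Y_i$ converges to $\int_0^t\partial_y F\,dY_s$; and $\tfrac12\partial_{xx}^2 F\,(\Delta X_i)^2$ converges to $\tfrac12\int_0^t\partial_{xx}^2 F\,u_s^2\,ds$ by the quadratic-variation argument, here extended to an anticipating coefficient $\partial_{xx}^2F(t_i,X_{t_i},Y_{t_i})$ using its path regularity. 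The two genuinely second-order contributions $\tfrac12\partial_{yy}^2 F\,(\Delta Y_i)^2$ and $\partial_{xy}^2 F\,\Delta X_i\,\Delta Y_i$ vanish in $L^1$ as $|\pi|\to 0$, since each summand is of size $O(\Delta t_i^2)$ and $O(\Delta t_i^{3/2})$ respectively, so their sums are $O(|\pi|)$ and $O(|\pi|^{1/2})$; in particular, as in the classical case where $Y$ has zero quadratic covariation with $X$, no cross term survives from the Taylor expansion.

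The crux is the term $\sum_i\partial_x F(t_i,X_{t_i},Y_{t_i})\,u_{t_i}\Delta Z_i$. Its integrand $g_s:=\partial_x F(s,X_s,Y_s)\,u_s$ is \emph{not} adapted, because $Y_s=\int_s^T\theta_r\,dr$ anticipates the future, so the limit must be read as a Skorohod integral. The key observation is that replacing the left-endpoint evaluation by the Skorohod interpretation produces a forward trace term. Indeed, evaluating $g$ at the left endpoint $t_i$ and differentiating in the direction of the future increment, the factors $X_{t_i}$ and $u_{t_i}$ are $\mathcal{F}_{t_i}^Z$-measurable and hence $D_s^Z(\cdot)=0$ for $s>t_i$, so only the anticipating factor $Y_{t_i}=\int_{t_i}^T\theta_r\,dr$ contributes,
\begin{equation*}
D_s^Z g_{t_i}=\partial_{xy}^2 F(t_i,X_{t_i},Y_{t_i})\,u_{t_i}\int_{t_i}^T D_s^Z\theta_r\,dr,\qquad s>t_i.
\end{equation*}
Passing to the limit, the associated trace converges to $\int_0^t\partial_{xy}^2 F(s,X_s,Y_s)\,u_s\,D^{-}Y_s\,ds$ with $D^{-}Y_s=\int_s^T D_s^Z\theta_r\,dr$, which is exactly the extra term in the statement, while the remaining part is the asserted Skorohod integral $\int_0^t\partial_x F\,u_s\,dZ_s$.

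The main obstacle is to make this last convergence rigorous: one must show that the forward Riemann sums of the anticipating integrand $g$ converge in $L^2(\Omega)$ to the Skorohod integral together with the trace, controlling on each subinterval the difference between $g_{t_i}$ and $g_s$ and the boundary contribution of the Skorohod integral. This is where the regularity $\theta\in\mathbb{L}_Z^{1,2}$ is indispensable, as it guarantees that $D_s^Z\theta_r$ exists and is square integrable, that $D^{-}Y_s$ is well defined, and that the map $s\mapsto\partial_{xy}^2 F(s,X_s,Y_s)\,u_s\,D^{-}Y_s$ is integrable; combined with the boundedness of the derivatives of $F$ and with $u,v\in L^2([0,T]\times\Omega)$, this controls every remainder. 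Collecting the surviving limits yields the stated formula, and undoing the localization recovers the general case.
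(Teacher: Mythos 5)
The paper does not prove this theorem itself but cites Al\`os \cite{Alos2006a}, and your discretization argument --- Taylor expansion of the telescoping sum, identification of the left-endpoint Riemann sums of the anticipating integrand with a Skorohod integral plus the trace term $\int_0^t\partial_{xy}^2F\,u_s\,D^{-}Y_s\,ds$ via $D_s^Z g_{t_i}=\partial_{xy}^2F(t_i,X_{t_i},Y_{t_i})\,u_{t_i}\int_{t_i}^T D_s^Z\theta_r\,dr$ for $s>t_i$ --- is precisely the approach of that reference. Your sketch is correct in outline (the freezing of $u,v$ at left endpoints is the usual technicality handled by approximation with simple processes), so it matches the proof the paper relies on.
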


\section{The inverse of the ATM Inverse call option price} \label{Greeks}

Recall that ATM value of an Inverse call option is given by
\begin{align*}
\begin{split}
BS(0,x,k^*,\sigma) &=\frac{1}{2} \left(\text{Erfc}\left(\frac{\sigma  \sqrt{T}}{2 \sqrt{2}}\right)-e^{\sigma ^2T} \text{Erfc}\left(\frac{3 \sigma 
   \sqrt{T}}{2 \sqrt{2}}\right)\right).
\end{split}
\end{align*}

We plot this function in Figure \ref{fig9} as a function of \( \sigma \sqrt{T} \) and observe that the function is not monotonic over its entire domain. As a result, the inverse with respect to \( \sigma \) will not be uniquely defined. However, since we are primarily interested in the behavior of the function for small values of \( T \), in this region the function is monotonically increasing, and the inverse will be well-defined within a small positive interval around zero.
\begin{figure}[h]
\centering
\includegraphics[width=60mm]{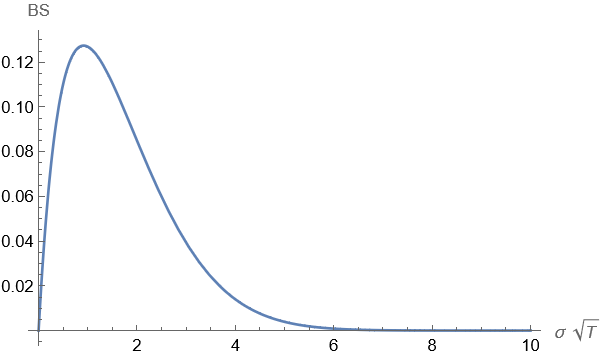} 
\caption{The function $BS(0, x, k^{\ast}, \sigma)$.}
\label{fig9}
\end{figure}

\begin{lemma} \label{bound3}
Under Hypothesis \ref{Hyp1}, for all $\epsilon>0$ sufficiently small there exists a positive constant $C(\epsilon)$
such that for all $0 \leq s <T \leq \epsilon$, 
\begin{equation*}
\begin{split}
\vert ( BS^{-1}) ^{\prime}( k^*, \Gamma_s) \vert \leq  C(\epsilon) T^{-\frac12},
\end{split}
\end{equation*}
where $\Gamma_s=\E\left(BS(0,X_0,k^*,v_{0})\right) +\frac{\rho }{2}\E\left(\int_{0}^{s} H(r,X_{r},k^*,v_{r})\Lambda _{r}dr\right)$.
\end{lemma}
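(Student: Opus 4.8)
The plan is to apply the inverse function theorem so as to reduce the required bound on $(BS^{-1})^{\prime}$ to a uniform lower bound on the Black--Scholes Vega. Write $\sigma_s^{*}:=BS^{-1}(k^{*},\Gamma_s)$ for the volatility reproducing the price $\Gamma_s$; this is well defined for $T\le\epsilon$ small, since on this range the map $\sigma\mapsto BS(0,X_0,k^{*},\sigma)$ is strictly increasing, as shown at the beginning of this appendix. The inverse function theorem then gives
\begin{equation*}
(BS^{-1})^{\prime}(k^{*},\Gamma_s)=\frac{1}{\partial_{\sigma}BS(0,X_0,k^{*},\sigma_s^{*})}.
\end{equation*}
Hence it suffices to show that $\sigma_s^{*}$ stays in a fixed compact subinterval $[a,b]\subset(0,\infty)$ and that the Vega is bounded below by $c\sqrt{T}$ on that interval, uniformly in $0\le s<T\le\epsilon$.

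First I would locate $\Gamma_s$. Its leading term is $\E(BS(0,X_0,k^{*},v_0))$; by Hypothesis \ref{Hyp1} we have $c_1\le v_0\le c_2$ a.s., and using the strict monotonicity together with the expansion $BS(0,X_0,k^{*},\sigma)=\frac{\sigma\sqrt{T}}{\sqrt{2\pi}}(1+O(\sqrt{T}))$ read off from the explicit ATM formula, this term is of order $\sqrt{T}$ and sandwiched between the prices at $c_1$ and $c_2$. For the correction $\frac{\rho}{2}\E(\int_0^s H(r,X_r,k^{*},v_r)\Lambda_r\,dr)$ I would apply estimate (\ref{o1}) of Lemma \ref{bound}, the bound $|\Lambda_r|\le\sigma_r\int_r^T|D_r^W\sigma_u^2|\,du$, and Hypotheses \ref{Hyp1} and \ref{Hyp4} via Cauchy--Schwarz, exactly as in the proof of Theorem \ref{decomposition}; this gives a bound of order $T^{H+\frac12}=o(\sqrt{T})$. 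Consequently $\Gamma_s=\frac{\sigma_{\mathrm{eff}}\sqrt{T}}{\sqrt{2\pi}}(1+o(1))$ for some $\sigma_{\mathrm{eff}}\in[c_1,c_2]$, and inverting the leading-order relation $\Gamma_s\approx\sigma_s^{*}\sqrt{T}/\sqrt{2\pi}$ shows that, for $\epsilon$ small enough, $\sigma_s^{*}$ lies in a fixed interval $[a,b]\subset(0,\infty)$ for all $0\le s<T\le\epsilon$.

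It then remains to bound the Vega from below on $[a,b]$. Using the explicit formula (\ref{derb}),
\begin{equation*}
\partial_{\sigma}BS(0,X_0,k^{*},\sigma)=-\sigma T e^{\sigma^2 T}\,\text{Erfc}\!\left(\frac{3\sigma\sqrt{T}}{2\sqrt{2}}\right)+\frac{e^{-\frac18\sigma^2 T}\sqrt{T}}{\sqrt{2\pi}},
\end{equation*}
I would expand for small $T$: the second term equals $\frac{\sqrt{T}}{\sqrt{2\pi}}(1+O(T))$, while the first is $O(T)$ uniformly for $\sigma\in[a,b]$ and $T\le\epsilon$. Hence $\partial_{\sigma}BS(0,X_0,k^{*},\sigma)\ge c\sqrt{T}$ for some $c=c(\epsilon)>0$ once $\epsilon$ is small, and combining this with the inverse-function identity yields $|(BS^{-1})^{\prime}(k^{*},\Gamma_s)|\le C(\epsilon)T^{-1/2}$, as claimed.

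The main obstacle is the second step: controlling $\sigma_s^{*}$ uniformly in both $s$ and $T$. Since $\Gamma_s$ is an expectation of a nonlinear functional and $BS^{-1}$ is only defined on the increasing branch, one must ensure that $\Gamma_s$ never leaves the strictly monotone region and that the implied level $\sigma_s^{*}$ is bounded away from $0$ (so the Vega lower bound applies) and from $\infty$. This is precisely where Hypothesis \ref{Hyp1} is essential, as it keeps both $v_0$ and the correction term under uniform control.
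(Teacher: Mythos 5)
Your proposal is correct and follows essentially the same route as the paper: invert the relation via $(BS^{-1})'(k^*,\Gamma_s)=1/\partial_\sigma BS(0,X_0,k^*,BS^{-1}(k^*,\Gamma_s))$, use the explicit Vega formula (\ref{derb}) to see that it behaves like $\sqrt{T}/\sqrt{2\pi}$ up to lower-order terms, and conclude the $T^{-1/2}$ bound. The only difference is presentational — the paper works with the scaled variable $y=BS^{-1}(k^*,\Gamma_s)\sqrt{T}$ and the continuity of $g$ near $0$, whereas you localize $\sigma_s^*$ in a compact interval $[a,b]$ via the expansion of $\Gamma_s$; your version actually spells out the localization step that the paper only sketches with "$\Gamma_s$ is small for sufficiently small $T$".
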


\begin{proof} 
We have that
\begin{align} \label{inversep}
\begin{split}
( BS^{-1}) ^{\prime}( k^*, \Gamma_s) =\frac{1}{\partial_{\sigma}BS(0,X_0,k^*,BS^{-1}( k^*, \Gamma_s)))}.
\end{split}
\end{align}
Recall form (\ref{derb}) that
\begin{align} \label{inversep3}
\begin{split}
\partial_{\sigma}BS(0,X_0,k^*,BS^{-1}( k^*, \Gamma_s))&=\sqrt{T}\left(-y e^{y^2} \text{Erfc}\left(\frac{3y}{2 \sqrt{2}}\right)+\frac{ e^{-\frac{1}{8} y^2}}{\sqrt{2 \pi }}\right)=:g(y)\sqrt{T},
\end{split}
\end{align}
where $y = BS^{-1}( k^*, \Gamma_s)\sqrt{T}$.
Therefore, we conclude that
\begin{align*}
\begin{split}
( BS^{-1}) ^{\prime}( k^*, \Gamma_s) &=\frac{1}{g(y)\sqrt{T}},
\end{split}
\end{align*}
Notice that \( g(0) > 0 \), and for sufficiently small \( x > 0 \), the function \( g(y) \) is non-negative and monotonically decreasing on the interval \( [0, x] \). Since \( g(y) \) is continuous, it is lower bounded by a positive constant on that interval. Additionally, \( \Gamma_s \) is also small for sufficiently small \( T \), ensuring that the inverse is well-defined in this case. This completes the proof.
\end{proof}

\begin{lemma} \label{bound2}
Under Hypothesis \ref{Hyp1}, for all $\epsilon>0$ sufficiently small there exists a positive constant $C(\epsilon)$
such that for all $0 \leq r <T \leq \epsilon$, 
\begin{equation*}
\begin{split}
\vert ( BS^{-1}) ^{\prime \prime}( k^*, \Phi_r) \vert  \leq C(\epsilon) T^{-\frac{1}{2}},
\end{split}
\end{equation*} 
where $\Phi_r = \E\left( BS\left( 0,X_0,k^*,v_{0}\right)| \mathcal{F}_r^{W'} \right)$.
\end{lemma}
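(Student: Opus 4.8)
The plan is to proceed exactly as in the proof of Lemma \ref{bound3}, now differentiating the inverse function one more time. Writing $f(\sigma)=BS(0,X_0,k^*,\sigma)$, the standard formula for the second derivative of an inverse function gives
\begin{equation*}
(BS^{-1})^{\prime\prime}(k^*,\Phi_r) = -\frac{f^{\prime\prime}(BS^{-1}(k^*,\Phi_r))}{\left(f^{\prime}(BS^{-1}(k^*,\Phi_r))\right)^3}.
\end{equation*}
The first step is therefore to express $f'$ and $f''$ through the auxiliary function $g$ introduced in (\ref{inversep3}). From (\ref{derb}) we already have $f^{\prime}(\sigma)=g(y)\sqrt{T}$ with $y=\sigma\sqrt{T}$ and $g(y)=-ye^{y^2}\text{Erfc}(3y/(2\sqrt{2}))+e^{-y^2/8}/\sqrt{2\pi}$; differentiating once more in $\sigma$ (using $\partial_\sigma y=\sqrt{T}$) yields $f^{\prime\prime}(\sigma)=g^{\prime}(y)\,T$. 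Substituting both expressions gives
\begin{equation*}
(BS^{-1})^{\prime\prime}(k^*,\Phi_r) = -\frac{g^{\prime}(y)\,T}{\left(g(y)\sqrt{T}\right)^3} = -\frac{g^{\prime}(y)}{g(y)^3}\,T^{-1/2}, \qquad y=BS^{-1}(k^*,\Phi_r)\sqrt{T},
\end{equation*}
so the whole problem reduces to bounding $|g^{\prime}(y)|/g(y)^3$ uniformly over the relevant range of $y$.

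Second, I would control this range. Since $k^*=X_0$ is fixed, $BS(0,X_0,k^*,\cdot)$ is a deterministic function of $\sigma$, and for $T\leq\epsilon$ with $\epsilon$ small enough the map $\sigma\mapsto BS(0,X_0,k^*,\sigma)$ is strictly increasing on $[0,c_2]$ (the turning point in the variable $\sigma\sqrt{T}$ is a fixed positive number, so $c_2$ lies in the monotone region once $T$ is small; this is the monotonicity discussed in Appendix B). Using Hypothesis \ref{Hyp1}, $c_1\leq v_0\leq c_2$, so monotonicity gives $BS(0,X_0,k^*,c_1)\leq BS(0,X_0,k^*,v_0)\leq BS(0,X_0,k^*,c_2)$; taking conditional expectation and then applying the increasing inverse yields $c_1\leq BS^{-1}(k^*,\Phi_r)\leq c_2$. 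Hence $y\in[c_1\sqrt{T},c_2\sqrt{T}]\subset[0,c_2\sqrt{\epsilon}]$, a small neighbourhood of the origin.

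Finally, I would invoke the analytic properties of $g$ already established for Lemma \ref{bound3}: $g(0)=1/\sqrt{2\pi}>0$, and $g$ is continuous and positive (in fact monotonically decreasing) on a small interval $[0,x_0]$. Choosing $\epsilon$ small enough that $c_2\sqrt{\epsilon}\leq x_0$, the continuous function $g$ is bounded below by a positive constant and $g^{\prime}$ is bounded above in absolute value on $[0,c_2\sqrt{\epsilon}]$, so $|g^{\prime}(y)|/g(y)^3\leq C(\epsilon)$ there. Combining this with the displayed identity gives $|(BS^{-1})^{\prime\prime}(k^*,\Phi_r)|\leq C(\epsilon)T^{-1/2}$, which is the claim. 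The only genuine obstacle is the second step, namely verifying that $\Phi_r$ stays inside the region where $BS^{-1}$ is well-defined and that $y$ does not escape the neighbourhood of $0$ where $g$ is bounded away from zero; once the two-sided bound $c_1\leq BS^{-1}(k^*,\Phi_r)\leq c_2$ is in hand, the remainder is a routine continuity argument identical in spirit to Lemma \ref{bound3}.
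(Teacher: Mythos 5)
Your proposal is correct and follows essentially the same route as the paper: both apply the inverse-function formula $(BS^{-1})''=-f''/(f')^3$, factor out the $T^{-1/2}$ scaling by writing everything in terms of $y=BS^{-1}(k^*,\Phi_r)\sqrt{T}$, and then bound the resulting function of $y$ on a small neighbourhood of the origin where it is continuous and the denominator is bounded away from zero. Your second step (the two-sided bound $c_1\leq BS^{-1}(k^*,\Phi_r)\leq c_2$ via monotonicity and Hypothesis \ref{Hyp1}) is in fact a slightly more explicit justification of the paper's brief remark that $\Phi_r$ is small for small $T$, so the inverse stays in the monotone region.
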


\begin{proof} 
We have that
\begin{align*}
\begin{split}
( BS^{-1}) ^{\prime \prime}( k^*_t, \Phi_r) &=-\frac{\partial^2_{\sigma\sigma}BS(0,X_0,k^*,BS^{-1}( k^*, \Phi_r)))}{\left(\partial_{\sigma}BS(0,X_0,k^*,BS^{-1}( k^*, \Phi_r)))\right)^3}.
\end{split}
\end{align*}
Therefore,
\begin{align*}
\begin{split}
( BS^{-1}) ^{\prime \prime}( k^*, \Phi_r) &=\frac{11 \sqrt{2} \pi  e^{\frac{y^2}{4}} y-8 \pi ^{3/2} e^{\frac{11 y^2}{8}} \left(2 y^2+1\right) \text{Erfc}\left(\frac{3 y}{2
   \sqrt{2}}\right)}{\sqrt{T} \left(2 \sqrt{\pi } e^{\frac{9 y^2}{8}} y \text{Erfc}\left(\frac{3 y}{2 \sqrt{2}}\right)-\sqrt{2}\right)^3} =f(y)T^{-\frac{1}{2}},
\end{split}
\end{align*}
where $y = BS^{-1}(k^*,\mathbb{E}_r BS( 0,X_0,k^*,v_{0}) )\sqrt{T}$. 
Notice that \( f(0) > 0 \), and for sufficiently small \( x > 0 \), the function \( f(y) \) is non-negative and monotonically increasing on the interval \( [0, x] \). Since \( f(y) \) is continuous, it is upper bounded by a positive constant on that interval. Additionally, \( \Phi_r \) is small for sufficiently small \( T \), ensuring that the inverse is well-defined in this case. This completes the proof.
\end{proof}

\begin{lemma} \label{bound33}
Under Hypothesis \ref{Hyp1}, for all $\epsilon>0$ sufficiently small there exists a positive constant $C(\epsilon)$
such that for all $0 \leq T \leq \epsilon$, 
\begin{equation*}
\begin{split}
\vert \partial^2_{\sigma k}  ( BS^{-1}) ( k^*, \sigma) \vert  \leq C(\epsilon) T^{-\frac{1}{2}},
\end{split}
\end{equation*} 
\end{lemma}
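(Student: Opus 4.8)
The plan is to follow the same strategy as in Lemmas \ref{bound3} and \ref{bound2}: express the mixed derivative of $BS^{-1}$ through the derivatives of $BS$ by implicit differentiation, substitute the explicit ATM expressions for those derivatives, and reduce everything to a function of the single scaled variable $y=BS^{-1}(k^{*},\sigma)\sqrt{T}$ after factoring out the correct power of $T$. As in the previous two lemmas, this is a purely deterministic estimate on the explicit $BS^{-1}$ function, valid on the interval of $T$ where $BS(0,X_0,k^*,\cdot)$ is monotone and hence invertible.

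First I would write $\Sigma(k,p):=BS^{-1}(k,p)$, defined implicitly by $BS(0,X_0,k,\Sigma(k,p))=p$, and differentiate this identity. Differentiating in $p$ gives $\partial_{p}\Sigma=1/\partial_{\sigma}BS$ (this is exactly $(BS^{-1})'$ from Lemma \ref{bound3}), while differentiating in $k$ gives $\partial_{k}\Sigma=-\partial_{k}BS/\partial_{\sigma}BS$. Differentiating the first relation once more in $k$ and inserting the expression for $\partial_k\Sigma$ yields
\begin{equation*}
\partial^2_{\sigma k}(BS^{-1})(k^{*},\sigma)=-\frac{1}{(\partial_{\sigma}BS)^{2}}\left(\partial^2_{\sigma k}BS-\partial^2_{\sigma\sigma}BS\,\frac{\partial_{k}BS}{\partial_{\sigma}BS}\right),
\end{equation*}
where all derivatives of $BS$ are evaluated at $(0,X_0,k^{*},BS^{-1}(k^{*},\sigma))$.

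Next I would collect the four ATM derivatives needed. The quantity $\partial_{\sigma}BS$ is given by (\ref{derb}) and $\partial_{k}BS$ by (\ref{p3}); the derivative $\partial^2_{\sigma\sigma}BS$ already appears, up to sign and a factor $(\partial_{\sigma}BS)^{3}$, inside $(BS^{-1})''$ in Lemma \ref{bound2}. The only genuinely new term is $\partial^2_{\sigma k}BS$, which I would obtain by differentiating the identity (\ref{p3}) in $\sigma$, using $\frac{d}{dz}\text{Erfc}(z)=-\frac{2}{\sqrt{\pi}}e^{-z^{2}}$, so that $\partial^2_{\sigma k}BS=\partial_{\sigma}BS+\frac{\sqrt{T}}{2\sqrt{2\pi}}e^{-T\sigma^{2}/8}$. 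Substituting these closed forms and performing the change of variable $y=BS^{-1}(k^{*},\sigma)\sqrt{T}$, every factor becomes an explicit elementary function of $y$ times a power of $T$: concretely $\partial_{\sigma}BS\sim\sqrt{T}$, $\partial^2_{\sigma k}BS\sim\sqrt{T}$, $\partial^2_{\sigma\sigma}BS\sim T$ and $\partial_{k}BS\sim 1$. A direct power count then shows that the bracket is of order $\sqrt{T}$ while the prefactor $(\partial_{\sigma}BS)^{-2}$ is of order $T^{-1}$, so the whole expression factors as $h(y)\,T^{-1/2}$ for an explicit function $h$.

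The main obstacle is to control these $\sqrt{T}$ contributions: both $\partial^2_{\sigma k}BS$ and $\partial^2_{\sigma\sigma}BS\,\partial_{k}BS/\partial_{\sigma}BS$ are individually $O(\sqrt{T})$, and after dividing by $(\partial_{\sigma}BS)^{2}$ one must check that no spurious lower power of $T$ or $1/y$ singularity survives, i.e. that the coefficient $h$ is continuous at $y=0$ with $h(0)$ finite. This reduces to the fact that the denominators are powers of the function $g$ of (\ref{inversep3}), which satisfies $g(0)>0$ and therefore stays bounded away from zero on a small interval, so $h$ has no blow-up as $y\to 0$. To finish I would argue, exactly as in Lemmas \ref{bound3} and \ref{bound2}, that for $T$ sufficiently small the relevant second argument is small, hence $y=BS^{-1}(k^{*},\sigma)\sqrt{T}$ remains in a neighbourhood of $0$ where the continuous function $h$ is bounded; the stated estimate $|\partial^2_{\sigma k}(BS^{-1})(k^{*},\sigma)|\le C(\epsilon)T^{-1/2}$ then follows.
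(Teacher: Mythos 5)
Your proposal is correct and follows essentially the same route as the paper's proof: implicit differentiation of $BS^{-1}$, substitution of the explicit ATM Greeks (\ref{derb}) and (\ref{p3}), rescaling by $y=BS^{-1}(k^{*},\sigma)\sqrt{T}$ so that the whole expression factors as $h(y)\,T^{-1/2}$, and boundedness of $h$ on a neighbourhood of $y=0$ for $T$ small. The one substantive difference is that your implicit differentiation of $\partial_{p}BS^{-1}=1/\partial_{\sigma}BS$ in $k$ correctly retains the chain-rule contribution $\partial^{2}_{\sigma\sigma}BS\,\partial_{k}BS/(\partial_{\sigma}BS)^{3}$ coming from the $k$-dependence of $BS^{-1}(k,\cdot)$ inside $\partial_{\sigma}BS$, whereas the paper's displayed identity (\ref{inversep2}) records only the term $-\partial^{2}_{\sigma k}BS/(\partial_{\sigma}BS)^{2}$. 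As your power count shows, $\partial^{2}_{\sigma\sigma}BS\sim T$, $\partial_{k}BS\sim 1$ and $(\partial_{\sigma}BS)^{3}\sim T^{3/2}$, all up to bounded functions of $y$ (via $g$ of (\ref{inversep3}) and $f$ of Lemma \ref{bound2}), so the extra term is also $O(T^{-1/2})$ and both computations yield the stated bound $C(\epsilon)T^{-1/2}$; your more complete expression changes the explicit function $h$ but not the conclusion, and is, if anything, the more careful version of the argument.
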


\begin{proof} 
Appealing to (\ref{inversep}), we have that
\begin{align} \label{inversep2}
\begin{split}
\partial^2_{\sigma k}( BS^{-1}) (k^{\ast}, \sigma) &=-\frac{\partial^2_{\sigma k} BS(0,X_0,k^{\ast},BS^{-1}( k^{\ast}, \sigma))) }{\left(\partial_{\sigma}BS(0,X_0,k^{\ast},BS^{-1}( k^{\ast}, \sigma)))\right)^2},
\end{split}
\end{align}
where
$$
BS(0,X_0,k,\sigma)= \frac{1}{2}\left(\text{Erfc}\left(\frac{2 k+\sigma^2 T-2 X_0}{2 \sqrt{2}\sigma  \sqrt{T}}\right)-e^{k+\sigma ^2 T-X_0}\text{Erfc}\left(\frac{2 k+3 \sigma ^2 T-2 X_0}{2 \sqrt{2}\sigma  \sqrt{T}}\right)\right).
$$
Differentiating, we get that
$$
\partial^2_{\sigma k}BS(0,X_0,k^{\ast},\sigma)=
\frac{3 \sqrt{T}e^{-\frac{\sigma ^2 T}{8}}}{2 \sqrt{2 \pi }}-\sigma  T e^{\sigma ^2 T}\text{Erfc}\left(\frac{3 \sigma  \sqrt{T}}{2 \sqrt{2}}\right).
$$
Therefore, using (\ref{inversep3}) and (\ref{inversep2}), we conclude that
$$
\partial^2_{\sigma k}( BS^{-1}) (k^{\ast}, \sigma) =\frac{3 \sqrt{2 \pi }e^{\frac{y^2}{8}}-4 \pi  y e^{\frac{5  y^2}{4}}\text{Erfc}\left(\frac{3 y}{2 \sqrt{2}}\right)}{\sqrt{T}\left(\sqrt{2}-2 \sqrt{\pi }y e^{\frac{9 y^2}{8}}\text{Erfc}\left(\frac{3 y}{2
   \sqrt{2}}\right)\right)^2}=h(y) T^{-1/2},
   $$
   where $y=BS^{-1} (k^{\ast}, \sigma) \sqrt{T}$. This clearly implies the desired result noticing that $h(0)>0$ and for $x>0$ sufficiently small, $h$ is upper bounded on the interval $[0,x]$, which concludes the proof.
\end{proof}

\section{Truncation argument}

We present here the truncation argument needed in order to apply Theorem \ref{limskew} for the SABR and fractional Bergomi models introduced in Sections 5.1 and 5.2.

We start with the SABR model defined in Section 5.1.
We define $\varphi(x)=\sigma_0 \exp(x)$.  For every  $n>1$, we consider a function $\varphi_n \in C^2_b$  satisfying that $\varphi_n(x)=\varphi(x)$ for any $x \in [-n,n]$, 
$\varphi_n(x) \in [\varphi(-2n) \vee \varphi(x), \varphi(-n)]$ for $x \leq -n$, and $\varphi_n(x) \in  [\varphi(n), \varphi(x) \wedge \varphi(2n)]$
for $x \geq n$. We set
$$
\sigma_t^n=\varphi_n\left(\alpha W'_t-\frac{\alpha^2}{2}t\right).
$$
It is easy to see that  $\sigma_t^n$ satisfies Hypotheses 1, 2, 3, and (\ref{d1}). In  fact, for \(r\leq t\), we have that
$$
D_r^{W'}\sigma^n_t = \varphi_n'\left(\alpha W'_t-\frac{\alpha^2}{2}t\right)\alpha,
$$
which implies that  (\ref{d1}) holds with $H=\frac12$ and Hypothesis 2 is satisfied with $\gamma<1/2$. Therefore, appealing to Theorem \ref{limskew} and using the fact that $\sigma_0^n=\sigma_0$, we conclude that for all $n>1$,
\begin{equation} \label{limit1}
\lim_{T\to 0}I^n(0,k^{*})=\sigma_0.
\end{equation}
where $I^{n}$ denotes the implied volatility under the volatility process $\sigma^n_t$. We then write
\begin{equation} \label{op}
I(0,k^{*})=I^n(0,k^{*})+I(0,k^{*})-I^n(0,k^{*}).
\end{equation}
By the mean value theorem,
\begin{equation*} \begin{split}
I(0,k^{*})-I^n(0,k^{*})&=\partial_\sigma (BS^{-1}(0,X_0,X_0,\xi))(V_0-V_0^n),
\end{split}
\end{equation*}
for some $\xi\in (V_0, V_0^n)$, where $V_0^n$ is the option price under $\sigma^n$ and $BS^{-1}(0,X_0,X_0,\xi)$ is defined in Appendix B.
Thus, for $T$ sufficiently small and $n>\alpha^2$, appealing to Lemma \ref{bound3} we get that
\begin{align*}
&\vert I(0,k^{*})-I^n(0,k^{*}) \vert \leq  \frac{C_n}{\sqrt{T}}\E\left (|e^{-X_T}-e^{-X^n_T}|{\bf1}_{\sup_{s\in [0,T]} \vert \ln (\sigma_s/\sigma_0) \vert >n}\right)\nonumber\\
&\qquad \qquad \leq  \frac{C_n}{\sqrt{T}}\E [(|e^{-X_T}+e^{-X^n_T}|^2)]^{1/2} \left[\P \left(\sup_{s\in [0,T]}\vert \ln (\sigma_s/\sigma_0) \vert>n \right)\right]^{1/2}\nonumber\\
&\qquad \qquad\leq \frac{C_n}{\sqrt{T}}\left[\P\left(\sup_{s\in [0,T]}|\alpha W_s'-\alpha^2s/2|>n\right)\right]^{\frac12}\nonumber \\
&\qquad \qquad\leq \frac{C_n}{\sqrt{T}}\left[\P\left(\sup_{s\in [0,T]}| W_s'|> \frac{\alpha}{2}\right)\right]^{\frac12}\nonumber
\end{align*}
for some constant $C_n>0$ that changes from line to line.
Then, Markov's inequality 
implies that  for all $p>0$,
$$\vert I(0,k^{*})-I^n(0,k^{*})\vert\leq \frac{C_n}{\sqrt{T}} 
\left[\E\left( \sup_{s\in [0,T]}|W_s'|^p\right)\right]^{1/2}\leq C_n T^{\frac{p}{4}-\frac12},
$$
 Thus, taking $p>2$ and using (\ref{limit1}) and (\ref{op}), we conclude that  
\begin{equation} \label{lim} 
 \lim_{T\to 0} I(0,k^{*}) =\sigma_0,
 \end{equation}
 which shows the validity of (\ref{main1}) for the SABR model.
 
Next, we prove (\ref{skewSABR}).
For $s \leq r \leq t$, we have
$$
D_s^{W'} D_r^{W'}\sigma^n_t = \varphi_n^{''}\left(\alpha W'_t-\frac{\alpha^2}{2}t\right)\alpha^2,
$$
which implies that (\ref{d2}) holds with $H=\frac12$.
Therefore, appealing to Theorem \ref{limskew} we get that 
$$
 \lim_{T \to 0} \partial_k I^n(0,k^{*}) = \lim_{T \rightarrow 0}\frac{\rho \alpha}{ \sigma_0^n T^2}\int_0^T\left(\int_r^T\E\left(\varphi'_n\left(\alpha W_u'-\frac{\alpha^2}{2}\right)\right) du \right) dr.
$$
Since $\varphi_n \in C^2_b$, using dominated convergence theorem, we get that uniformly for all $T >0$,
$$
\lim_{n \rightarrow \infty}\frac{\rho \alpha}{ \sigma_0^n T^2}\int_0^T\left(\int_r^T\E\left(\varphi'_n\left(\alpha W_u'-\frac{\alpha^2}{2}\right)\right) du \right) dr=\frac12 \rho \alpha.
$$

Next, similarly as above we can write
\begin{equation*} \begin{split}
\partial_k I(0,k^{*})=\partial_k I^n(0,k^{*})+\partial_k(I(0,k^{*})-I^n(0,k^{*})).
\end{split}
\end{equation*}

By the mean value theorem,
\begin{equation*} \begin{split}
\partial_k(I(0,k^{*})-I^n(0,k^{*}))&=\partial_\sigma \partial_k (BS^{-1}(0,X_0,X_0,\xi))(V_0-V_0^n),
\end{split}
\end{equation*}
for some $\xi\in (V_0, V_0^n)$. 
Then, appealing to Lemma \ref{bound33} and proceeding exactly as above, we get that
for sufficiently large $n$,
$$
\lim_{T \rightarrow 0} \partial_k(I(0,k^{*})-I^n(0,k^{*}))=0,
$$
which concludes the proof of (\ref{skewSABR}).

We next prove (\ref{lim}), (\ref{skewRB}) and (\ref{skewRB2}) for the fractional Bergomi model.
We define $\varphi$ and $\varphi_n$ as for the SABR model, and we set
$$
\sigma_t^n=\varphi_n\left(\frac{1}{2} v \sqrt{2H}Z_t-\frac{1}{4}v^2t^{2H}\right).
$$
It is easy to see that  $\sigma_t^n$ satisfies Hypotheses 1, 2, 3, and (\ref{d1}).  In fact, for \(r\leq t\), we have that
$$
D_r^{W'}\sigma^n_t = \varphi_n'\left(\frac{1}{2} v \sqrt{2H}Z_t-\frac{1}{4}v^2t^{2H}\right)\frac12 
v \sqrt{2 H} (t-r)^{H-\frac{1}{2}},
$$
which implies that Hypothesis (\ref{d1}) holds  and Hypothesis 2 is satisfied with $\gamma<H$.
Moreover, for $s \leq r \leq t$, we have
$$
D_s^{W'} D_r^{W'}\sigma^n_t = \varphi_n^{''}\left(\frac{1}{2} v \sqrt{2H}Z_t-\frac{1}{4}v^2t^{2H}\right)\frac{1}{4} 
v^2 \sqrt{4 H^4} (t-r)^{H-\frac{1}{2}}(t-s)^{H-\frac{1}{2}},
$$
which implies that  (\ref{d2}) holds.
Therefore, by Theorem \ref{limskew} we get that (\ref{limit1}) holds.
We next follow the same computations as the SABR model, but using the fractional Bergomi, to get that for $T$ sufficiently small and $n>v^2$,
\begin{align*}
&\vert I(0,k^{*})-I^n(0,k^{*}) \vert \leq  \frac{C_n}{\sqrt{T}}\E\left (|e^{-X_T}-e^{-X^n_T}|{\bf1}_{\sup_{s\in [0,T]} \vert \ln (\sigma_s/\sigma_0) \vert >n}\right)\nonumber\\
&\qquad \qquad \leq  \frac{C_n}{\sqrt{T}}\E [(|e^{-X_T}+e^{-X^n_T}|^2)]^{1/2} \left[\P \left(\sup_{s\in [0,T]}\vert \ln (\sigma_s/\sigma_0) \vert>n \right)\right]^{1/2}\nonumber\\
&\qquad \qquad\leq \frac{C_n}{\sqrt{T}}\left[\P\left(\sup_{s\in [0,T]}|\alpha \frac{1}{2} v \sqrt{2H}Z_s-\frac{1}{4}v^2 s^{2H}|>n\right)\right]^{\frac12} \\
&\qquad \qquad\leq \frac{C_n}{\sqrt{T}}\left[\P\left(\sup_{s\in [0,T]}|Z_s|>\frac{v}{\alpha}\right)\right]^{\frac12},
\end{align*}
for some constant $C_n>0$.
Then, Markov's inequality 
implies that  for all $p>0$
$$\vert I(0,k^{*})-I^n(0,k^{*})\vert\leq \frac{C_n}{\sqrt{T}}  \left[\E\left( \sup_{s\in [0,T]}|\int_0^s (s-r)^{H-\frac12} dW'_s|^p\right)\right]^{1/2}\leq C_n T^{\frac{p H}{2}-\frac12},
$$
Thus, taking $p>\frac{1}{H}$, the above shows that for $n$ sufficiently large,
$$
\lim_{T \rightarrow 0} (I(0,k^{*})-I^n(0,k^{*}))=0,
$$
which concludes the proof of (\ref{lim}).

Concerning the proof of (\ref{skewRB}) and (\ref{skewRB2}), on one hand, appealing to Theorem \ref{limskew}, we get that 
$$
 \lim_{T \to 0} T^{\max(\frac12-H, 0)} \partial_kI^n(0,k^{*})=\lim_{T \rightarrow 0}T^{\max(\frac12-H, 0)}\frac{\rho \alpha}{ \sigma_0^n T^2}\int_0^T\left(\int_r^T\E\left(D_r^{W'} \sigma_u^n\right) du \right) dr.
$$ 
On the other hand, following along the same lines as above, one can easily show that for $n$ sufficiently large 
\begin{equation} \label{skewlo}
\lim_{T \rightarrow 0} \partial_k(I(0,k^{*})-I^n(0,k^{*}))=0.
\end{equation}
Thus, using dominated convergence we get that uniformly for all $T>0$,
$$
\lim_{n \rightarrow \infty}\frac{\rho \alpha}{ \sigma_0^n T^2}\int_0^T\left(\int_r^T\E\left(D_r^{W'} \sigma_u^n\right) du \right) dr=
\frac{\rho \alpha}{ \sigma_0 T^2}\int_0^T\left(\int_r^T\E\left(D_r^{W'} \sigma_u\right) du \right) dr.
$$
Then, computing this integral and using (\ref{skewlo}), we conclude that
(\ref{skewRB}) and (\ref{skewRB2}) hold true.

\bibliography{literature}
\end{document}